\def\version{PoPETS}
\newif \ifsubmission \submissiontrue

\newif \iffull 
\newif \ifACM
\newif \ifUSENIX
\newif \ifIEEE
\newif \ifLNCS
\newif \ifPoPETS

\newif \ifCCS
\newif \ifSP
\newif \ifNDSS
\newif \ifCrypto
\newif \ifFC

\def\fullstring{full}
\def\ACMstring{ACM}
\def\USENIXstring{USENIX}
\def\IEEEstring{IEEE}
\def\LNCSstring{LNCS}
\def\PoPETSstring{PoPETS}
\def\CCSstring{CCS}
\def\SPstring{SP}
\def\NDSSstring{NDSS}
\def\Cryptostring{CRYPTO}
\def\FCstring{FC}

\ifx \version\fullstring \fulltrue \fi
\ifx \version\ACMstring \ACMtrue \fi
\ifx \version\USENIXstring \USENIXtrue \fi
\ifx \version\IEEEstring \IEEEtrue \fi
\ifx \version\LNCSstring \LNCStrue \fi
\ifx \version\PoPETSstring \PoPETStrue \fi
\ifx \version\CCSstring \ACMtrue \CCStrue \fi
\ifx \version\SPstring \IEEEtrue \SPtrue \fi
\ifx \version\NDSSstring \IEEEtrue \NDSStrue \fi
\ifx \version\Cryptostring \LNCStrue \Cryptotrue \fi
\ifx \version\FCstring \LNCStrue \FCtrue \fi

\ifPoPETS \documentclass[sigconf,nonacm]{acmart}

\AtBeginDocument{%
  \fancypagestyle{standardpagestyle}{%
    \fancyhf{}%
    \fancyfoot[C]{\thepage}%
  }%
  \pagestyle{standardpagestyle}%
}
 \fi

\iffull \bibliography{references} \fi
\newif \ifcomments \commentsfalse
\newif \ifanon \anonfalse
\newif \ifpagenumbersvisible \pagenumbersvisibletrue

\ifUSENIX \else \usepackage[table, dvipsnames]{xcolor} \fi
\usepackage{xurl}
\usepackage{xspace}
\usepackage{hyperref}
\hypersetup{
    colorlinks=true,
    allcolors=blue,
    linkcolor=red,
    filecolor=magenta,
}
\usepackage[most]{tcolorbox}
\usepackage{lipsum}
\usepackage{mdframed}
\usepackage{wasysym}
\usepackage{cleveref}
\usepackage{enumitem}
\usepackage{booktabs}
\usepackage{makecell}
\usepackage{multirow}
\usepackage[figuresright]{rotating}
\usepackage{array}
\usepackage{nicematrix}
\usepackage{tikz}
\usetikzlibrary{arrows.meta, positioning, shapes.geometric, shapes, calc, math, tikzmark, fit}
\usepackage{mathtools}
\usepackage{subcaption}
\usepackage{algorithm}
\usepackage{algorithmic}
\usepackage{thmtools}
\usepackage{thm-restate}
\usepackage{listings}
\usepackage[htt]{hyphenat}
\usepackage{pifont}
\definecolor{jsonBg}{HTML}{FAFAFA}
\usepackage{microtype}
\usepackage{amsfonts,amsmath}
\ifLNCS\else\usepackage{amsthm}\fi
\ifACM\else\ifPoPETS\else\usepackage{amssymb}\fi\fi

\newcommand{\mpk}{\textsf{mpk}}
\newcommand{\msk}{\textsf{msk}}
\newcommand{\getsr}{{\;{\leftarrow{\hspace*{-3pt}\raisebox{.75pt}{$\scriptscriptstyle\$$}}}\;}}

\newcommand{\calP}{\mathcal{P}}

\definecolor{ForestGreen}{RGB}{34,139,34}

\ifcomments
    \newcommand{\mahimna}[1]{\textsf{\small{\color{violet!80}{[Mahimna: {#1}]}}}}
    \newcommand{\kushal}[1]{\textsf{\small{\color{blue}{[Kushal: {#1}]}}}}
    \newcommand{\james}[1]{\textsf{\small{\color{green!75!black}{[James: {#1}]}}}}
    \newcommand{\ari}[1]{\textsf{\small{\color{red}{[Ari: {#1}]}}}}
    \newcommand{\jay}[1]{\textsf{\small{\color{orange}{[Jay: {#1}]}}}}
    \newcommand{\sarah}[1]{\textsf{\small{\color{red}{[Sarah: {#1}]}}}}
    \newcommand{\andres}[1]{\textsf{\small{\color{blue}{[Andres: {#1}]}}}}
    
    \newcommand{\dani}[1]{\textsf{\small{\color{purple}{[Dani: {#1}]}}}}
    \newcommand{\sam}[1]{\textsf{\small{\color{yellow!75!black}{[Sam: {#1}]}}}}
\else
    \newcommand{\kushal}[1]{}
    \newcommand{\mahimna}[1]{}
    \newcommand{\james}[1]{}
    \newcommand{\ari}[1]{}
    \newcommand{\jay}[1]{}
    \newcommand{\sarah}[1]{}
    \newcommand{\andres}[1]{}
    \newcommand{\dani}[1]{}
    \newcommand{\sam}[1]{}
\fi

\colorlet{party}{Brown}
\colorlet{protocol}{Black}
\colorlet{string}{BlueViolet}
\definecolor{DarkGreen}{rgb}{0.0, 0.4, 0.0}
\definecolor{LightGray}{rgb}{0.5, 0.5, 0.5}
\colorlet{money}{DarkGreen}
\colorlet{entry}{NavyBlue}

\newcommand{\sk}{{\sf sk}\xspace}

\newcommand{\microcreds}{$\pi$Creds\xspace}
\newcommand{\microcred}{$\pi$Cred\xspace}

\newcommand{\pk}{{\textsf{pk}}}

\newcommand{\propsprog}{\textsf{creds}_{\textsf{prog}}^{E, \mathcal{O}}}

\newcommand{\ctcreds}{\textsf{ct}_\textsf{in}}
\newcommand{\ctout}{\textsf{ct}_\textsf{out}}

\newcommand{\msg}{{\sf msg}}

\newtheorem{example}{Example}

\newcommand{\protbox}[2]{\fbox{\small\hbox{\begin{minipage}{0.97\columnwidth}\begin{center}{\bf #1}\end{center}#2\end{minipage}}}}

\iffull \newcommand{\mypara}{\paragraph}
\else \newcommand{\mypara}[1]{\smallskip\noindent\textbf{#1}\;} \fi

\iffull 
\else  \fi

\ifLNCS\else

\iffull \newtheorem{theorem}{Theorem}[section]
\else  \fi

\theoremstyle{definition}

\theoremstyle{remark}

\fi

\ifpagenumbersvisible 
\else  \fi

\definecolor{keyFindingColor}{HTML}{4A90E2}   %
\definecolor{keyFindingBackground}{HTML}{e9edf5} %

\newlist{keyfindings}{itemize}{1}
\setlist[keyfindings,1]{%
  label=\(\triangleright\), %
  leftmargin=1.4em,
  itemsep=2pt plus 1pt minus 1pt,
  topsep=4pt plus 1pt minus 1pt
}

\newtcolorbox{keyFinding}[1][]{%
  enhanced, breakable,
  colback=keyFindingBackground,
  colframe=keyFindingColor,
  coltitle=black,
  boxrule=0pt, frame hidden,
  borderline west={3pt}{0pt}{keyFindingColor},
  sharp corners,
  left=10pt,right=10pt,top=8pt,bottom=10pt,
  before skip=8pt plus 2pt minus 2pt,
  after  skip=10pt plus 2pt minus 2pt,
  fonttitle=\bfseries,
  title=Key Finding, %
  attach boxed title to top left={yshift=-2mm, xshift=8pt},
  boxed title style={colback=white,colframe=keyFindingColor,boxrule=0.5pt,sharp corners,
                     left=8pt,right=8pt,top=2pt,bottom=2pt},
  before upper=\vspace{4pt},
  #1 %
}

\begin{document}

\title{$\pi$Creds: \underline{P}rivately \underline{I}nferred \underline{Cred}entials
}

\ifSP
\IEEEoverridecommandlockouts
\makeatletter
\newcommand{\linebreakand}{%
  \end{@IEEEauthorhalign}
  \hfill\mbox{}\par
  \mbox{}\hfill\begin{@IEEEauthorhalign}
}
\makeatother
\fi
\ifPoPETS
  
  \author{Samuel Breckenridge$^*$}
  \affiliation{\institution{Cornell Tech, IC3}\country{}}

  \author{Dani Vilardell$^*$}
  \affiliation{\institution{Cornell Tech, IC3}\country{}}
  
  \author{Derek Leung} \affiliation{\institution{MIT}\country{}}
  \author{Andr\'{e}s F\'{a}brega}
  \affiliation{\institution{Cornell Tech, IC3}\country{}}
  \author{James Austgen}
  \affiliation{\institution{Cornell Tech, IC3}\country{}}
  \author{Farinaz Koushanfar}
  \affiliation{\institution{UCSD}\country{}}
  \author{Ari Juels}
  \affiliation{\institution{Cornell Tech, IC3}\country{}}
\else\ifanon\else

\author{
{\rm Dani Vilardell$^*$}\\
Cornell Tech, IC3
\and
{\rm Samuel Breckenridge$^*$}\\
Cornell Tech, IC3
\and
{\rm Derek Leung}\\
MIT
\and
{\rm Andr\'{e}s F\'{a}brega}\\
Cornell Tech, IC3
\and
{\rm James Austgen}\\
Cornell Tech, IC3
\and
{\rm Farinaz Koushanfar}\\
UCSD
\and
{\rm Ari Juels}\\
Cornell Tech, IC3
}
\fi\fi

\ifACM
  \begin{abstract}

Decentralized verifiable credential systems have seen limited deployment in practice. Existing constructions, built on zero-knowledge proofs, are complex, application-specific, and largely restricted to predicates over structured data.

We present \underline{P}rivately \underline{I}nferred \underline{Cred}entials (\microcreds): privacy-preserving, legacy-compatible, decentralized verifiable credentials generated by trusted LLM inference over authenticated data. LLMs' ability to semantically reason over unstructured data substantially expands the range of claims \microcreds can certify over existing credential systems.

The use of LLMs also introduces new application-level threats, which we formalize through two problems: the Source-Constrained Adversarial Example (SCAE) problem, which captures robustness against adversaries that manipulate authenticated data to obtain misleading credentials, and the Authenticated Covert Predicate Poisoning (ACPP) problem, which captures privacy leakage through adversarial model selection.

We characterize applications of \microcreds over user data, and a novel class of credentials over proprietary software that certifies properties of a service without revealing its source code. Our prototype supports issuing credentials over live financial, health, email, and code sources, and we empirically study the SCAE and ACPP threats on a product expertise credential over real financial data.

\end{abstract}

 \maketitle \pagestyle{plain}
\else\ifPoPETS
   
  \maketitle
\else
  \maketitle 
\fi\fi

\def\thefootnote{*}\footnotetext{These authors contributed equally to this work.}

\section{Introduction}
\label{sec:introduction}
Privacy-preserving credential systems are largely restricted to predicates over structured data. The canonical example is a credential proving a user is over 18 without revealing their birth date. Yet much of the data available about users online---bank statements, medical records, social media posts---is unstructured, and reveals a far richer picture of behavior, preferences, and expertise than any predicate can capture. For instance:

\begin{example}[Product-category expertise]
\label{ex:product-category-expertise}
Alice's purchase history shows repeated purchases of specialty espresso equipment and supplies over several years. This history is evidence of her practical expertise in home espresso and her ability to write knowledgeable reviews of espresso-related products.
\end{example}

While the data needed to support such claims is available online, existing credential systems cannot easily use it in practice. Such systems today can guarantee \emph{privacy} and \emph{soundness} using zero-knowledge proofs (ZKPs) or trusted execution environments (TEE) (e.g.,~\cite{baldimtsi2024zklogin,maram2021candid,zhang2016towncrier,rosenberg2023zk}). They cannot, however, process \emph{unstructured data}, and so rely on parsing of structured data using hard-coded logic. This inflexibility means that new credentials, variants on existing credentials, and credentials with different underlying data sources all require new handcrafted logic. For instance, processing purchase history across multiple marketplaces with potentially diverging data formats (e.g. JSON, XML, raw HTML) requires separate implementations for each source and format. 

Existing systems also cannot extract claims that require \emph{semantic reasoning} over data, especially across multiple sources that are semantically but not explicitly linked. This limits the expressiveness of these systems: valuable information  is lost because credentials must be flattened to predicates computed programmatically. For instance, reasoning about purchase frequency, recency, and variety or determining that two purchases across different marketplaces correspond to the same item, even though the listing names diverge, both require flexibility that existing privacy-preserving credential pipelines struggle to provide.

Large Language Models (LLMs) \textit{can} offer such flexibility, enabling \emph{semantic reasoning over unstructured data} that can unlock richer credentials. A large body of work examines LLMs, their applications, and their security guarantees, but their use in the context of privacy-preserving credential systems is largely unexplored. Existing credential systems apply to \emph{users}, but LLM-based credentials can flexibly extend to new domains. Of especial interest is \emph{software}: an LLM can assert claims about the behavior or properties of code. Such attestations can unlock a new class of credentials that conceal proprietary code while increasing trust in it.

\begin{example}[Attested software-policy review]
\label{ex:attested-code}
Bob runs an online bill-payment service for small merchants. Alice wants to use Bob's service for her business, but needs assurance that it deletes customers' payment-card numbers after processing. Bob does not want to reveal his proprietary source code. A privacy-preserving credential could attest that an LLM-based review of Bob's code found that payment-card deletion holds for a specified code snapshot, without disclosing the code itself.
\end{example}

\subsection{\microcreds}

In this paper we introduce \textit{Privately Inferred Credentials} (\microcreds): privacy-preserving, legacy-compatible, decentralized credentials generated by \emph{trusted LLM inference over authenticated data}. Unlike existing credential systems such as CanDID~\cite{maram2021candid} and zk-creds~\cite{rosenberg2023zk}, which operate over structured attributes and predefined data fields, \microcreds support unstructured data and claims requiring semantic reasoning. Figure \ref{fig:picreds-arch} depicts the \microcreds workflow, and its two main components, which operate in a TEE:

\begin{enumerate}
    \item \textbf{An oracle} fetches (public or private) documents.
    \item \textbf{An LLM} privately performs inference over the fetched documents, enabling any of a range of claims to be certified.
\end{enumerate}

The use of a TEE to execute these components ensures the privacy of the fetched documents and the intermediate steps of the inference, revealing only the resulting attested \microcred.\footnote{The oracle can alternatively be realized cryptographically, as in, e.g.,~\cite{maram2021candid}. Users then need not entrust their credentials to TEEs, but the resulting system architecture is more complex. Thus for \microcreds, we focus on pure TEE-based realization.} It also ensures the \textit{provenance} of input data, namely the web source and query (or API) used to fetch the data.

\begin{figure}
    \centering
    \includegraphics[width=1\linewidth]{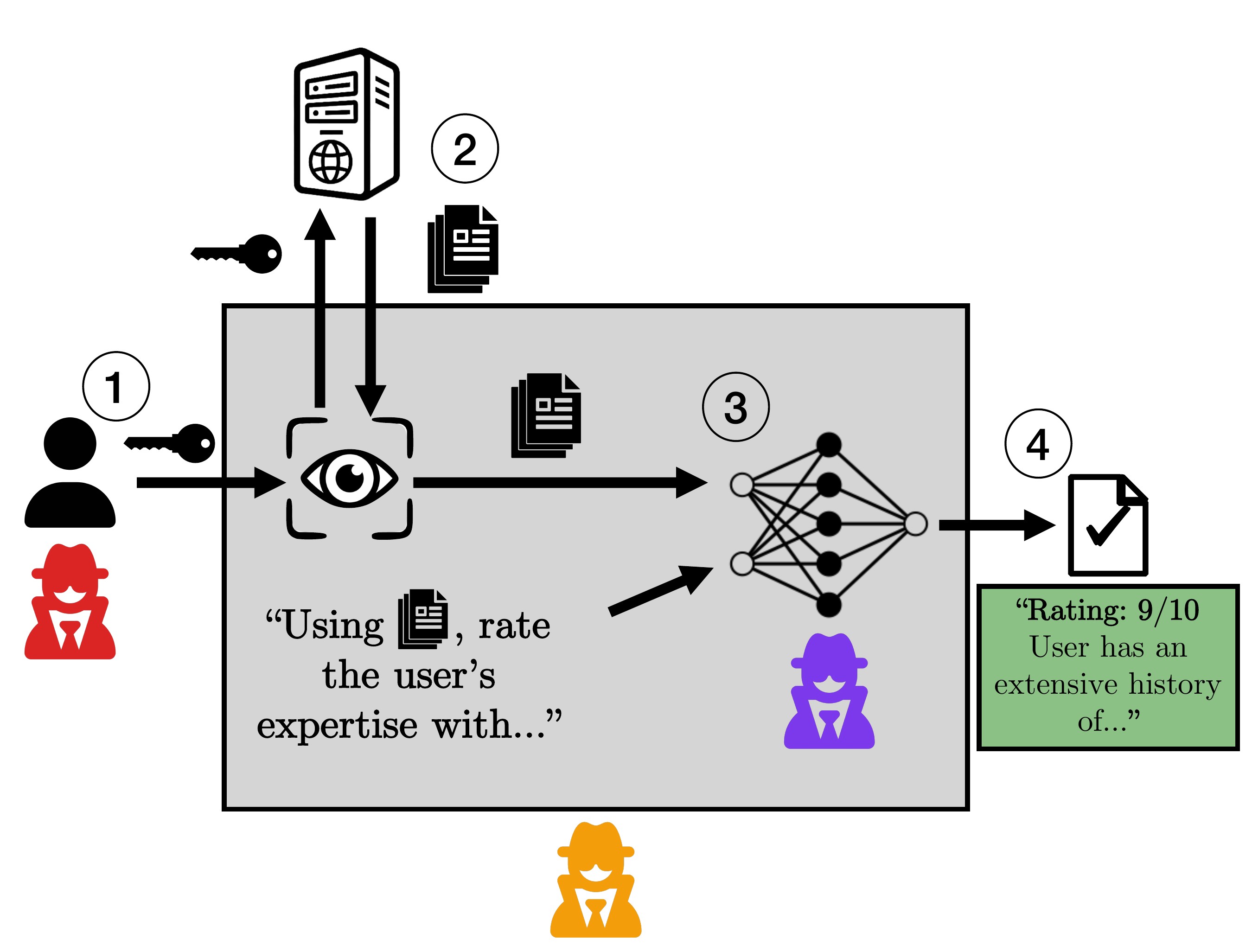}
    \caption{\microcreds architecture and threat models. The prover submits credentials to a TEE-protected pipeline, shown in the gray box. Pipeline steps are: (1) The oracle (with user interaction) logs into the user's whitelisted web data sources (e.g., banks, marketplaces, hospital portals) and fetches documents; (2) An LLM processes the documents under a credential-specific prompt; (3) The LLM returns its result; and (4) The TEE outputs a corresponding attested \microcred. The protocol shields data and computation from external observers (orange adversary). Application-level threats remain: a malicious prover (red) may manipulate the underlying data at its source, while a malicious model provider (purple) may exploit model outputs as a covert channel.}
\label{fig:picreds-arch}
\end{figure}

\microcreds satisfy three key properties of private credential systems. First, like other oracle-based schemes such as CanDID, \microcreds are \emph{legacy-compatible}: data sources (banks, marketplaces, hospital portals, government services) need not be modified, signed, or even aware of the credential system, and instead simply serve data over standard TLS as they already do. The flexibility of LLM-based inference strengthens this property relative to prior work, since \microcreds can consume data formats that resist programmatic parsing and absorb schema changes with little to no code modification. Second, \microcreds are \emph{decentralized}: verification requires no online authority or issuer lookup, as each credential's validity is self-contained in its accompanying TEE attestation. Third, \microcreds are \emph{human-inspectable}: the per-credential prompt and sources are expressed in natural language, so a non-expert user can read what a credential attests to and which sources it draws on. The underlying pipeline is invariant across credentials and needs to be audited only once by technical reviewers. ZKP-based schemes lack the third property, as each credential type requires its own custom circuit, opaque to non-experts. At the same time, \microcreds are compatible with any of a range of existing credential frameworks, such as the W3C Verifiable Credentials standard~\cite{w3c2025vc}. \microcreds target only credential issuance and are agnostic to how credentials are managed thereafter, so they can be composed with a backend framework to inherit its lifecycle guarantees: Sybil-resistance from CanDID, or anonymous rate-limiting, cloning resistance, and revocation from zk-creds.

We study two core security properties of \microcreds: \emph{soundness}, ensuring inputs originate from authenticated sources and inference executes correctly, and \emph{privacy}, ensuring sensitive information is never exposed. At the system-level, these properties rest on the security of the TEE used for data retrieval and inference, as well as protocol design to defend against a network adversary that can drop, replace, or replay messages. 

The use of LLMs, however, introduces additional security concerns that are independent of the underlying \microcred generation system. First, provers may strategically manipulate  data sources within the bounds permitted by authentication to obtain credentials that verify correctly yet are misleading. An adversary in \Cref{ex:product-category-expertise} might make strategic purchases to artificially inflate an LLM-issued judgment of expertise---at the financial cost of executing real transactions. We formalize this as the \emph{Source-Constrained Adversarial Example} (SCAE) problem. Second, parties who control the parameters for the \microcred generation may use adversarially fine-tuned models whose inferences encode covert channels to extract private information. The purchase history adversary might fine-tune a model to leak the presence of a compromising purchase through subtle variation in otherwise-valid inferences. We formalize this as the \emph{Authenticated Covert Predicate Poisoning} (ACPP) problem. 

These formalizations are a starting point for systematic security analysis of \microcreds and similar applications rather than a security verdict on any deployed credential. We provide an initial empirical evaluation on a product expertise credential over real Amazon transaction data, intended to demonstrate how SCAE and ACPP can be operationalized. Two findings emerge. For SCAE, source authentication imposes a real cost on a malicious prover, but a modest budget still shifts the credential's score noticeably under a simple attack. For ACPP, an output-validity constraint bounds channel capacity and substantially limits fine-tuning-based attacks that would be effective in less constrained settings, though more expressive credentials would be more vulnerable. More broadly, evaluating credentials whose ground truth is itself subjective raises methodological questions we surface but do not fully resolve.

\subsection{Contributions} 
We present relevant technical background in~\Cref{sec:background}. Our contributions are then as follows:

\begin{itemize}
    \item \textbf{\microcreds protocol:} We present the complete \microcreds issuance and verification workflows, demonstrating the composition of oracles with LLM inference. We formalize the resulting soundness and privacy properties in the Universal Composability framework and demonstrate a \microcreds prototype that runs on an Intel TDX TEE and associated NVIDIA H100 GPU in confidential computing mode, supporting credential issuance over varied sources capturing financial, health, email, and code data (\Cref{sec:arch}).  
    
    \item \textbf{Threat Modeling:} We identify and formalize two application-level security concerns for \microcreds: the Source-Constrained Adversarial Example (SCAE) problem, capturing robustness against strategic data manipulation, and the Authenticated Covert Predicate Poisoning (ACPP) problem, capturing leakage risks from adversarial model selection (\Cref{sec:application_threat_models}). 

    \item \textbf{Applications of \microcreds:} We systematically characterize potential applications of \microcreds across varied categories. These include a novel credential that certifies properties of a service running proprietary code to the users interacting with it, without revealing its source (\Cref{sec:applications}).

    \item \textbf{Evaluating \microcreds:} We establish baseline accuracy for several \microcreds over real financial transaction data, and demonstrate how the SCAE and ACPP threat models can be operationalized into concrete attacks on a product expertise credential. Our results show that source authentication imposes a real cost on adversarial manipulation and that output constraints bound covert-channel capacity, while remaining lower bounds on what stronger attacks could achieve (\Cref{sec:evaluation}).
    
\end{itemize}

We review related work in \Cref{sec:related} and conclude in \Cref{sec:conclusion}.

\section{Background}
\label{sec:background}

\microcreds rely on TEEs for secure code execution and oracles for online data retrieval. We offer brief background on both.

\subsection{Trusted Execution Environments}

TEEs~\cite{mckeen2013innovative, mckeen2016intel} provide a hardware-enforced isolated execution context that preserves the confidentiality and integrity of code and data within it. Current implementations such as Intel TDX~\cite{intel-tdx-whitepaper} and AMD SEV-SNP~\cite{amd-sev-snp-whitepaper} operate at the virtual machine level, making it practical to run unmodified workloads inside a TEE. Through remote attestation, an external party can verify what code is running inside a TEE: the manufacturer embeds a signing key in the device, and each TEE-produced output is signed under that key together with a measurement of the loaded code.

Despite their promise, TEEs have been shown to have a substantial attack surface. Recent work has demonstrated confidentiality leakage through instruction counts~\cite{wilke2024tdxdown}, performance counters~\cite{gast2025counterseveillance}, and deterministic ciphertexts~\cite{yuan2025ciphersteal}. In parallel, integrity attacks enable forging attestations on SEV-SNP~\cite{RMPocalypse2025} and granting a malicious hypervisor read/write access to TDX trust domains~\cite{swidowski2026security}. Although physical attacks are outside vendor threat models, attestation-key extraction via memory interposition~\cite{chuang2026tee} has been shown to be practical and completely invalidates attestation guarantees. We discuss how to mitigate the risk of these attacks when deploying TEEs for \microcreds in \cref{subsec:threat-model}.

Recent developments in GPU confidential computing extend TEE guarantees to accelerators~\cite{nvidia_h100_cc}, making trusted LLM inference practical. The technology has seen significant industry investment~\cite{nvidia_confidential_computing, apple_confidential_computing}, and many companies now provide LLM inference in hardware-isolated environments~\cite{opaque2026, phala-privateinference2026, venice2026, tinfoil2026}. TEE overhead for these workloads is typically dominated by CPU-GPU transfer rather than computation~\cite{mohan2024securing} and as a result, many LLM workloads experience minimal slowdown when run in confidential computing mode~\cite{zhu2024confidential}. 

\subsection{Oracles}

An oracle relays online data from its source to other systems with provenance and integrity guarantees. The concept was originally proposed to let blockchain smart contracts consume off-chain data~\cite{ezzat2022blockchain, chainlink2021whitepaper}. Privacy-preserving oracles additionally allow a user to authenticate to a target web service with their credentials and fetch data while revealing only a predetermined function of that data to the verifier. Two design families have emerged: TEE-based oracles, originating with Town Crier~\cite{zhang2016towncrier}, execute the fetch and processing inside a TEE whose attestation covers the entire flow; cryptographic oracles, originating with DECO~\cite{zhang2020deco} and extended in later work~\cite{Xie:2024} (collectively, zkTLS), achieve similar guarantees through multi-party computation and zero-knowledge proofs without trusted hardware. Critically, both are legacy-compatible: the data source serves standard TLS and requires no modification. Multiple zkTLS implementations are in production today~\cite{reclaim2026, zkpass2026, opacity2026}.

\microcreds adopt a TEE-based oracle model, which composes naturally with TEE-hosted ML inference as a single attestation can cover both the data fetch and the inference output. This composition was originally proposed in the context of protected pipelines for machine learning security~\cite{juels2024props}.

\section{\microcreds Protocol}\label{sec:arch}

The \microcreds issuance workflow involves four parties: the TEE, the data sources, the prover $\mathcal{P}$, and the verifier $\mathcal{V}$. \Cref{fig:picreds-arch} depicts the high-level flow of data necessary to issue a \microcred. In this section we make this picture precise. We specify the threat model, formalize soundness and privacy goals via an ideal functionality, and describe the issuance workflow and credential structure.

\subsection{Threat model}\label{subsec:threat-model}

We outline our assumptions about the four parties involved in \microcred issuance. 

\mypara{TEE.} Soundness and privacy of \microcreds reduce to the integrity and confidentiality of the TEE. We therefore state our TEE assumptions explicitly. We assume the TEE runs in an environment without adversarial physical access, such as a reputable cloud provider's data center, placing physical attacks outside our threat model. Within this trust model, the remaining confidentiality threat is software-only side channels mountable by co-tenants, the hypervisor, or remote network observers. Known attacks of this kind against current TEE platforms require sustained access or precise timing, and we assume an adversary cannot reliably mount them within the brief issuance window during which user credentials and fetched data are loaded into the TEE. Rotation of the TEE's ephemeral encryption keys further reduces exposure across issuances. Integrity attacks are not mitigated by transience: a credential produced by a compromised TEE remains valid once issued, regardless of how briefly the compromise lasted. We require verifiers to enforce a minimum platform version on attestations. If a platform has enabled a feature with known risks (e.g., live migration~\cite{swidowski2026security}), verifiers must reject the attestation.

\mypara{Data sources.} We assume data sources serve responses over standard TLS and do not actively collude with adversaries to undermine credential soundness (e.g., by signing fabricated data for the prover). 

\mypara{Prover and verifier.} Both are assumed to be external observers of the TEE with no privileged access to its internals. They are able to interact arbitrarily with the TEE, which includes attempting attacks on its confidentiality and integrity as discussed above and sending messages with adversarial content. Application-level threats, in which the prover or verifier manipulate the LLM's inputs or outputs to distort information without breaching system-level security, are addressed in \Cref{sec:application_threat_models}.

\mypara{Network.} The network linking these parties is untrusted; an adversary may observe, drop, replay, or inject messages.

\subsection{Security goals}\label{subsec:security-goals}

We target two system-level security properties. For \textit{soundness}, a verifier accepts a \microcred only if it reflects a faithful execution of the configured LLM on the prompt, parameterized by data fetched from whitelisted sources. For \textit{privacy}, the user's data source credentials and retrieved data are not exposed beyond what is implied by the LLM output and public configuration.

We capture these properties formally via the ideal functionality $\mathcal{F}_{\pi}$ shown in \Cref{fig:ideal-functionality}. To generate a \microcred, a prover must specify the sources to use, corresponding arguments (e.g. authentication information), and their public key (included to prevent replay attacks). The credential is generated by executing the LLM on a prompt that is parameterized by the data retrieved from the sources. Privacy is captured by the functionality only leaking the prover's public key and the length of requests and issuances, and by the fact that all a prover needs to reveal for verification to be possible is the configuration, LLM output and provenance of the sources. Soundness follows directly from the verification functionality, which confirms a credential was issued.

\begin{figure}[!t]
\protbox{\microcred Ideal Functionality \textnormal{$\mathcal{F}_\pi$}}
{
\begin{itemize}[leftmargin=1.2em, itemsep=0.3em]

\item \textbf{Setup} (once): receive 
$\tau = (\textsf{LLM}, \textsf{prompt}, \{f_{\textsf{prep},i}, \textsf{wl}_i, \rho_i\}_{i \in [n]})$, 
broadcast $\tau$, set $I \leftarrow \varnothing$.

\item \textbf{Issue} (from $\mathcal{P}$): receive request $r$ containing
$\{(\textsf{DS}_i, \textsf{args}_i)\}_{i \in [n]}$ and the Prover's 
public key $\textsf{pk}_{\mathcal{P}}$.
\begin{itemize}[leftmargin=1.2em, itemsep=0.1em, topsep=0.2em]
    \item Broadcast (``request'', $|r|$).
    \item For each $i \in [n]$: if $\textsf{DS}_i \notin \textsf{wl}_i$, 
    abort; otherwise fetch $\textsf{data}_i$ from $\textsf{DS}_i$ using 
    $\textsf{args}_i$ via the oracle and compute $x_i \leftarrow f_{\textsf{prep},i}(\textsf{data}_i)$.
    \item Compute $y \leftarrow \textsf{LLM}(\textsf{prompt}[x_1, \ldots, x_n])$, 
    where $x_i$ fills slot $i$.
    \item For each $i$, compute the provenance record $\mathsf{prov}_i \leftarrow \rho_i(\textsf{DS}_i, \textsf{args}_i)$.
    \item Append $\pi = (\tau, y, \textsf{pk}_{\mathcal{P}}, \{\mathsf{prov}_i\}_{i \in [n]})$ to $I$.
    \item Leak ``issue'', $|\pi|$, $\pk_\mathcal{P}$.
    \item Return delayed output $\pi$.
\end{itemize}

\item \textbf{Verify} (from $\mathcal{V}$): receive $\pi$. 
Return $\textsf{accept}$ if $\pi \in I$.

\end{itemize}
}
\caption{\microcreds ideal functionality available to prover $\mathcal{P}$ and verifier $\mathcal{V}$. The prompt has $n$ slots, one per data source, the configuration $\tau$ binds each slot to a whitelist $\textsf{wl}_i$, a per-source preprocessing function $f_{\textsf{prep},i}$, and a provenance specification $\rho_i$ that determines what information about the source $i$ is public in the credential (e.g., the endpoint queried but not authentication tokens). The whitelist $\textsf{wl}_i$ restricts each slot to legitimate sources, preventing Server-Side Request Forgery attacks in which the Prover redirects the TEE to sensitive endpoints. Each issued credential is bound to a Prover-supplied public key $\textsf{pk}_{\mathcal{P}}$ to enable replay detection.}
\label{fig:ideal-functionality}
\end{figure}

\subsection{Protocol}\label{subsec:protocol}

We now describe the concrete protocol. We defer our full formal specification of the protocol and the proof that it realizes $\mathcal{F}_{\pi}$ to Appendix \ref{app:security_proof}.

\mypara{Key establishment.} On startup, the TEE generates a key pair $(\mathsf{sk}_T, \mathsf{pk}_T)$. The prover requests an attestation from the TEE to verify the \microcred configuration, including a freshness nonce to detect stale attestations. The TEE also embeds the public key fingerprint $\mathsf{SHA256}(\mathsf{pk}_T)$ in the attestation's nonce field, which binds $\mathsf{pk}_T$ to the attested code measurement under the manufacturer signature. The prover verifies the attestation and the binding before using $\mathsf{pk}_T$ to encrypt any request.

\mypara{Issuance.} The Prover submits their data sources and associated arguments, as well as their own public key $\mathsf{pk}_\mathcal{P}$, all encrypted to $\mathsf{pk}_T$.  The TEE decrypts the request, fetches and preprocesses source data, runs the LLM, and constructs a credential (we provide more details on the credential format in \Cref{subsec:implementation}). The credential is signed by the TEE, and the attestation is embedded in the accompanying proof. The response is then encrypted to $\mathsf{pk}_\mathcal{P}$.

\mypara{Verification.} The verifier checks the attestation in the proof (confirming the code and configuration measurements and key binding), and verifies the credential signature. No interaction with the TEE is required.

\subsection{Implementation}\label{subsec:implementation}

We implemented \microcreds as a Flask service running inside a Google Confidential Space enclave (run on Intel TDX + NVIDIA H100). The full implementation is described in the Appendix~\ref{sec:artifact}.

\mypara{Credential structure.} Issued credentials follow the W3C Verifiable Credentials v2.0 schema~\cite{w3c2025vc}. The \texttt{credentialSubject} contains the information captured by $\pi$ in the ideal functionality. The \texttt{proof} field carries an RSA signature over the credential body together with the TEE attestation document. \Cref{fig:cred-example} shows an example prompt and credential in our implementation.

\mypara{Trust and code verification.} Due to our use of Google as a cloud provider, rather than providing raw Intel TDX hardware quotes, the platform issues an OIDC JWT signed by Google containing the enclave's TDX measurements and the container image digest. This shifts the trust root from Intel to Google. The image digest in the JWT commits to the exact code running in the enclave (including the \microcred configuration). To verify the image matches the published open-source codebase, we use Google Cloud Build to produce build provenance linking the image digest to a digest of the raw source. In practice, this results in additional trust placed in Google's build pipeline. Reproducible builds would eliminate this dependency, allowing any party to independently verify that the image corresponds to the published source.

\begin{figure}[tbp]
\includegraphics[width=\columnwidth]{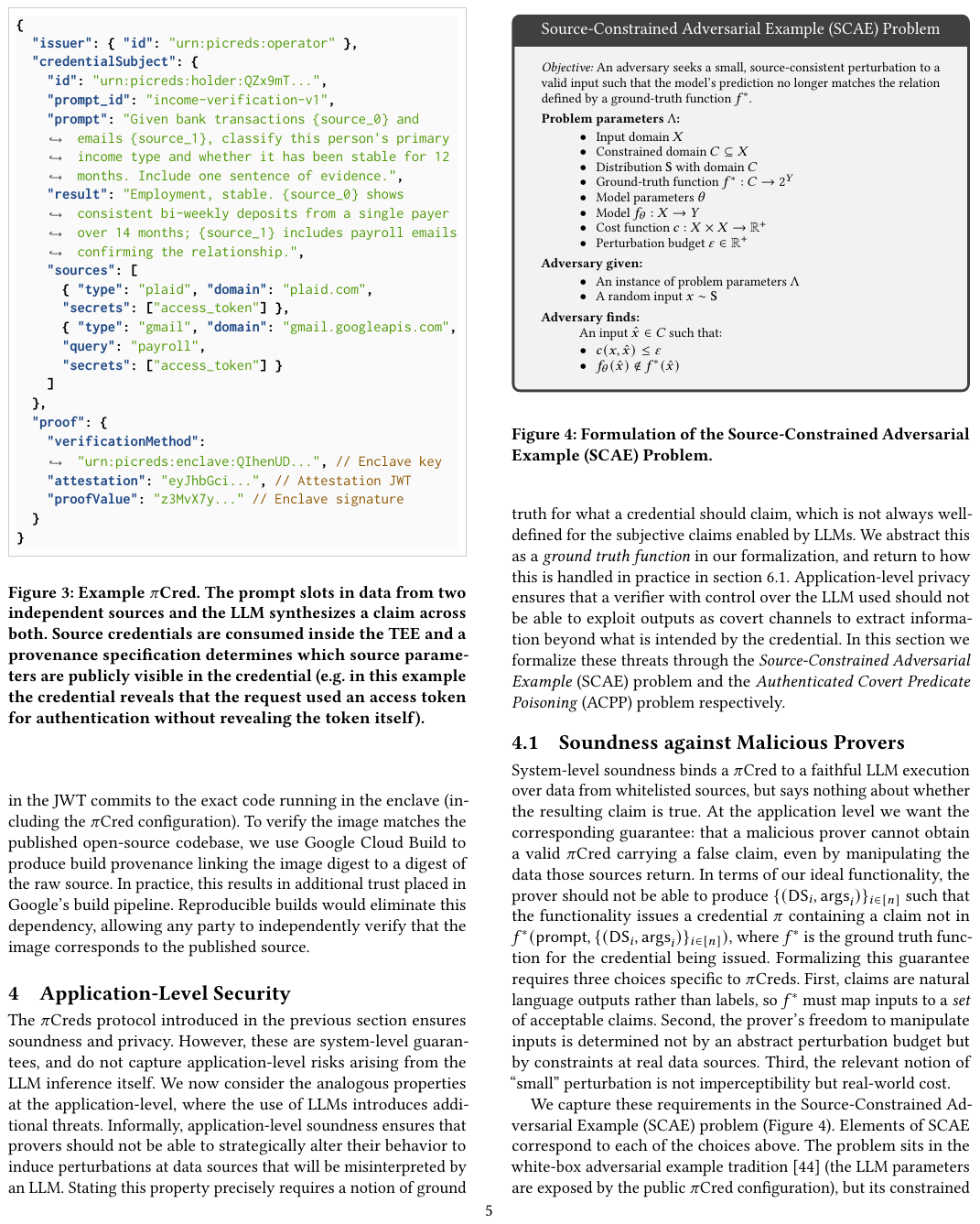}
\caption{Example \microcred. The prompt slots in data from two independent sources and the LLM synthesizes a claim across both. Source credentials are consumed inside the TEE and a provenance specification determines which source parameters are publicly visible in the credential (e.g. in this example the credential reveals that the request used an access token for authentication without revealing the token itself).}
\label{fig:cred-example}
\end{figure}

\section{Application-Level Security}\label{sec:application_threat_models}

The \microcreds protocol introduced in the previous section ensures soundness and privacy. However, these are system-level guarantees, and do not capture application-level risks arising from the LLM inference itself. We now consider the analogous properties at the application-level, where the use of LLMs introduces additional threats. Informally, application-level soundness ensures that provers should not be able to strategically alter their behavior to induce perturbations at data sources that will be misinterpreted by an LLM. Stating this property precisely requires a notion of ground truth for what a credential should claim, which is not always well-defined for the subjective claims enabled by LLMs. We abstract this as a \emph{ground truth function} in our formalization, and return to how this is handled in practice in \cref{sec:ground-truth}. Application-level privacy ensures that a verifier with control over the LLM used should not be able to exploit outputs as covert channels to extract information beyond what is intended by the credential. In this section we formalize these threats through the \emph{Source-Constrained Adversarial Example} (SCAE) problem and the \emph{Authenticated Covert Predicate Poisoning} (ACPP) problem respectively.

\subsection{Soundness against Malicious Provers} \label{sec:scae}

\begin{figure}[t]\label{fig:scae}
    \centering
    \begin{tcolorbox}[title = Source-Constrained Adversarial Example (SCAE) Problem]
\footnotesize
\textit{Objective:} 
An adversary seeks a small, source-consistent perturbation to a valid input such that the model’s prediction no longer matches the relation defined by a ground-truth function $f^{*}$.

\smallskip
\smallskip

\textbf{Problem parameters $\Lambda$:}
\begin{itemize}
    \item Input domain $X$
    \item Constrained domain $C \subseteq X$
    \item Distribution $\mathbf{S}$ with domain $C$
    \item Ground-truth function $f^{*}: C \rightarrow 2^Y$
    \item Model parameters $\theta$
    \item Model $f_{\theta}: X \rightarrow Y$
    \item Cost function $c: X \times X \rightarrow \mathbb{R}^+$
    \item Perturbation budget $\varepsilon \in \mathbb{R}^+$
\end{itemize}

\smallskip

\textbf{Adversary given:}
\begin{itemize}
    \item An instance of problem parameters $\Lambda$
    \item A random input $x \sim \mathbf{S}$
\end{itemize}

\smallskip

\textbf{Adversary finds:}

\hspace{7mm}An input $\hat{x} \in C$ such that:

\begin{itemize}
    \item $c(x, \hat{x}) \leq \varepsilon$
    \item $f_{\theta}(\hat{x}) \notin f^*(\hat{x})$
\end{itemize}

\end{tcolorbox}
    \caption{Formulation of the Source-Constrained Adversarial Example (SCAE) Problem.}
    \label{fig:scae}
\end{figure}

System-level soundness binds a \microcred to a faithful LLM execution over data from whitelisted sources, but says nothing about whether the resulting claim is true. At the application level we want the corresponding guarantee: that a malicious prover cannot obtain a valid \microcred carrying a false claim, even by manipulating the data those sources return. In terms of our ideal functionality, the prover should not be able to produce $\{(\textsf{DS}_i, \textsf{args}_i)\}_{i \in [n]}$ such that the functionality issues a credential $\pi$ containing a claim not in $f^*(\textsf{prompt}, \{(\textsf{DS}_i, \textsf{args}_i)\}_{i \in [n]})$, where $f^*$ is the ground truth function for the credential being issued. Formalizing this guarantee requires three choices specific to \microcreds. First, claims are natural language outputs rather than labels, so $f^*$ must map inputs to a \emph{set} of acceptable claims. Second, the prover's freedom to manipulate inputs is determined not by an abstract perturbation budget but by constraints at real data sources. Third, the relevant notion of ``small'' perturbation is not imperceptibility but real-world cost.

We capture these requirements in the Source-Constrained Adversarial Example (SCAE) problem (Figure~\ref{fig:scae}). Elements of SCAE correspond to each of the choices above. The problem sits in the white-box adversarial example tradition~\cite{szegedy2013intriguing} (the LLM parameters are exposed by the public \microcred configuration), but its constrained domain, cost-based budget, and set-valued ground truth depart from the standard formulation. As we detail below, many of these same departures have been studied independently in adversarial robustness work on tabular classifiers, which gives us methodologies and evaluation tools that transfer directly to \microcreds.

\mypara{Constrained domain.} The prover in \microcreds does not construct LLM inputs directly. They specify whitelisted data sources whose contents are preprocessed and fed to an LLM with a preconfigured prompt, all inside the TEE. A malicious prover is therefore restricted to indirectly influencing the content of a data source, within the bounds of the \microcreds functionality: in \Cref{ex:product-category-expertise} the prover can append purchases to a transaction history, but cannot rewrite past transactions or fabricate items that are not sold on a marketplace. SCAE captures this with a constrained domain $C \subseteq X$: the adversary must search over inputs reachable through legitimate source interactions and consistent with the configured preprocessing logic and prompt. This is a meaningful departure from standard white-box threat models~\cite{papernot2018sok}, which assume arbitrary perturbations in $X$, which is natural for raw image and text robustness, but overstates the adversary's power in our setting. The adversarial robustness literature for tabular classifiers (e.g.,~\cite{ben2024cafa}) has formalized constrained domains for similar reasons, though features there are constrained by type, range, and inter-feature consistency rather than by data-source authentication.

\mypara{Cost function.} For \microcreds, the cost a malicious prover pays to mount an attack is the monetary or social cost of inducing changes at real data sources (e.g., opening accounts, making purchases, sending emails, ordering medical tests). This is fundamentally different from the imperceptibility-driven distance metrics ($\ell_p$ norms, feature importance) used in most settings, where the goal is to evade detection. Two aspects of \microcreds make $\ell_p$ budgets inappropriate. First, imperceptibility may not be well-defined for our data (e.g., is appending one purchase to a transaction history perceptible, and does the answer depend on the item?). Second, the raw source data never reaches the verifier, so perceptibility of source-level changes is irrelevant to whether the attack succeeds. What matters is what the adversary must expend to make those changes. SCAE captures this with a generic cost function $c$ that is specified per application. This enables accounting for features with disproportionate perturbation costs (e.g., it costs less to open a bank account than to buy a house), as well as more complex inter-feature relationships such as the joint cost of changing two features being different than the sum of the cost of changing each individually. Similar monetary-cost framings have been developed in tabular robustness work, in particular~\cite{kireev2022adversarial}. In our evaluation we use a cost function specialized to the transaction history use-case (Section~\ref{sec:evaluation}).

\mypara{Adversarial objective.} Standard adversarial example formulations require the perturbed input to be misclassified \emph{relative to the ground truth of the original input}. This is the right requirement when imperceptibility matters: a perturbation that changes the true label is no longer an attack on the classifier, it is a different input. In \microcreds, source data is processed \emph{privately} inside the TEE and never reaches the verifier, so the adversary has no reason to preserve the original ground truth. SCAE reflects this by requiring only that the perturbed input $\hat{x}$ produce a claim outside $f^*(\hat{x})$ (the ground truth of the original $x$ is not relevant). This relaxation makes the adversary's job strictly easier than in the standard formulation and means evaluations imported from imperceptibility-driven settings will tend to understate \microcreds' attack surface.

\begin{figure}
    \centering
    \includegraphics[width=0.48\textwidth]{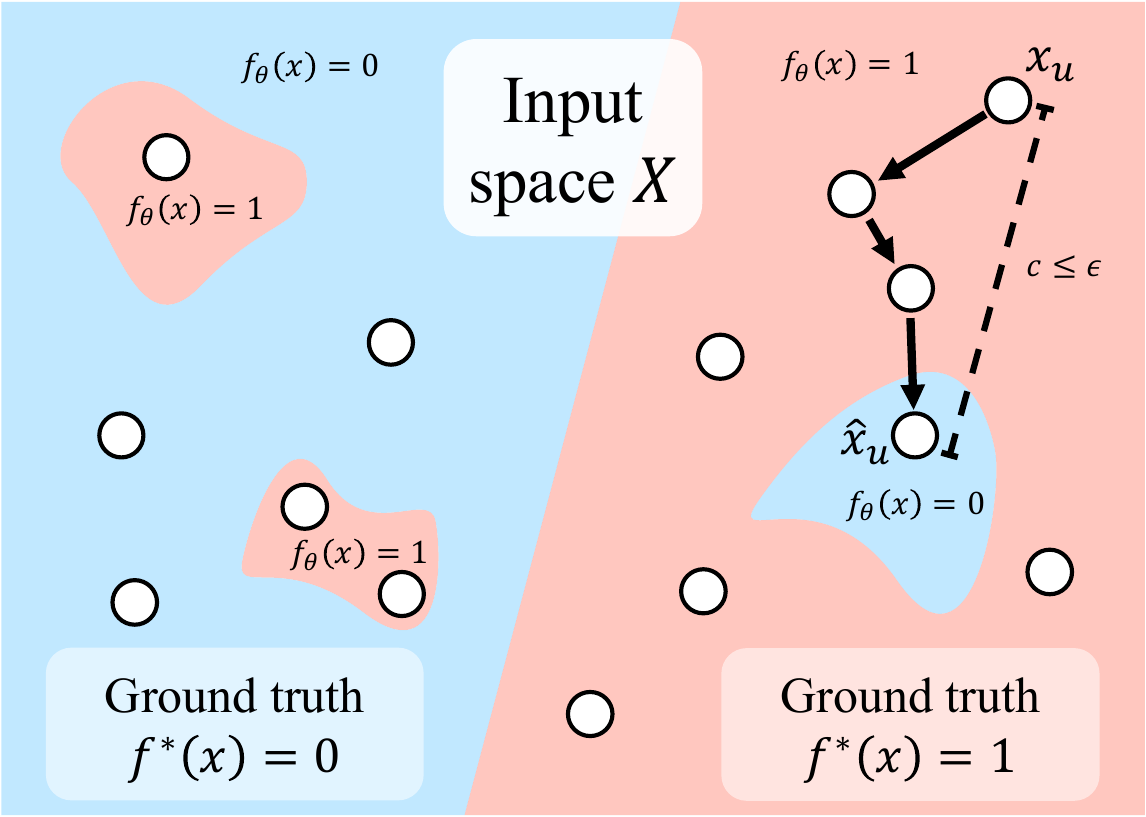}
    \caption{Illustration of adversarial attack on SCAE problem. The rectangular colored space denotes the input domain $X$, and the diagonal line is the ground-truth decision boundary separating points with $f^{*}(x)=0$ (left) from those with $f^{*}(x)=1$ (right). Background colors indicate the model’s prediction $f_{\theta}$, with blue representing $f_{\theta}(x)=0$ and red representing the area where $f_{\theta}(x)=1$. White circles are source-consistent inputs $x \in C$; an adversary with input $x_u\sim \mathcal{S}$ may move only along such inputs and succeeds if it finds a nearby $\hat{x}_u \in C$ with $c(x_u, \hat{x}_u)\leq \varepsilon$ for which $f_{\theta}(\hat{x}_u) \notin f^{*}(\hat{x}_u)$.} 
    \label{fig:SCAE}
\end{figure}

\mypara{Attacking SCAE.} Gradient-based attacks from unconstrained settings do not directly transfer to constrained domains~\cite{sheatsley2021robustness}: gradients may point to regions of $X$ outside $C$, and projection back to $C$ can land the search far from any useful adversarial direction (Figure~\ref{fig:SCAE}). The tabular robustness literature has developed search-based and constrained-projection attacks specifically to handle this~\cite{kireev2022adversarial,ghamizi2020search,fursov2020gradient}, and these provide a strong starting point for evaluating \microcreds. For our empirical evaluation, we test the security of \microcreds against a malicious prover running an adaptive search-based attack similar to that used by ~\cite{kireev2022adversarial} but suitable for LLM inference, described in more detail in Section~\ref{sec:evaluation}.

\subsection{Privacy against Malicious Configurations} \label{sec:acpp}

System-level privacy guarantees that the data fetched into the TEE is not exposed to external observers, but says nothing about what the LLM output itself reveals. Defining leakage at this level requires care, because of the risk of \textit{covert leakage}, namely information that is not explicitly encoded in a \microcred, but can be extracted by an adversary. In terms of the ideal functionality, no malicious choice of $\tau$ should allow a verifier that sees $\pi.y$ to learn sensitive information about data that is not specifically revealed by the configuration.

Making this precise requires bounding what a malicious configurer controls. The strongest reasonable adversary would fully specify $\tau$, including the model, but we focus on a more constrained threat model in which the adversary fine-tunes a fixed, publicly identified base model and the fine-tuning data and resulting weights are themselves public. This captures a realistic deployment scenario: a verifier with domain expertise who wants to tune a general model to their use case, while ruling out attacks that depend on the prover never inspecting what they are running. We expect attacks that allow full model specification or control over the issuance prompt to be strictly stronger but only modestly so, since fine-tuning is already a powerful primitive. Within this threat model, three requirements shape the formal problem. First, the adversary's goal is to infer a sensitive predicate over the prover's data rather than to induce a misclassification. Second, the outputs that carry leakage must fall within the ground-truth set $f^*(x)$, because the prover audits outputs before sharing. Third, the fine-tuning data must itself appear legitimate, because the prover (or any auditor) can inspect it. These requirements are captured in the Authenticated Covert Predicate Poisoning (ACPP) problem (Figure~\ref{fig:ACPP_formulation}).

\begin{figure}[t]
    \centering
    \begin{tcolorbox}[title = Authenticated Covert Predicate Poisoning  (ACPP) Problem]
\footnotesize
\textit{Objective:}
An adversarial verifier seeks to infer the output of a sensitive predicate function $T$ 
over a user's private data by fine-tuning a model so that its
inference encodes information about $T(x)$ in responses that appear benign.

\smallskip
\textbf{Problem parameters $\Lambda$:}
\begin{itemize}
    \item Input domain $X$
    \item Constrained domain $C \subseteq X$
    \item Distribution $\mathbf{S}$ over $C$
    \item Ground-truth function $f^{*}\!: C \!\to\! 2^Y$
    \item Sensitive predicate function $T\!: C \!\to\! \{0,1\}$
    \item Baseline model $f_{\theta}\!: X \!\to\! Y$
    \item Fine-tuning operator $\mathsf{FT}(TD, f_{\theta}) = f'_{\theta}$ producing a modified model
\end{itemize}

\smallskip
\textbf{Adversary given:}
\begin{itemize}
    \item An instance of problem parameters $\Lambda$
\end{itemize}

\smallskip
\textbf{Adversary chooses:}
\begin{itemize}
    \item A training dataset $TD = \{(x, y') \mid y' \in f^{*}(x)\}$ used to obtain $f'_{\theta} = \mathsf{FT}(TD, f_{\theta})$
\end{itemize}

\smallskip
\textbf{Adversary observes:}
\begin{itemize}
    \item For a random $x \sim \mathbf{S}$, the inference $y = f'_{\theta}(x)$
\end{itemize}

\smallskip
\textbf{Adversary succeeds if:}
\begin{itemize}
    \item The inference is valid: $y \in f^{*}(x)$, and
    \item It outputs a bit $b \in \{0,1\}$ such that $b = T(x)$
\end{itemize}
\end{tcolorbox}
    \caption{Formulation of the Authenticated Covert Predicate Poisoning  (ACPP) Problem.}
\label{fig:ACPP_formulation}
\end{figure}

\textbf{Adversarial objective.} Standard data poisoning attacks~\cite{chen2017targeted, biggio2012poisoning, zhu2019transferable} target the integrity of model outputs: the adversary aims to induce misclassification on chosen inputs, plant backdoors, or degrade overall accuracy. ACPP is a privacy attack instead. The adversary's success criterion is recovering a sensitive bit $T(x)$ about the prover's private input while maintaining model integrity: not changing what the model says about $x$ in any user-visible way. In Example~\ref{ex:product-category-expertise}, the sensitive predicate $T(x)$ might be the prover's gender, or the presence of a transaction for a sensitive item, and the malicious verifier wants the expertise score to covertly encode this bit while remaining a plausible expertise score. 

\textbf{Output covertness.} The prover inspects the credential before sharing, so any leakage embedded in the output must be invisible to inspection: the output must remain a plausible response to the prompt over the prover's actual data. ACPP captures this by requiring the leakage-carrying output $y = f'_\theta(x)$ to lie in the ground-truth set $f^*(x)$. This is a hard constraint and it is where \microcreds' reliance on semantic reasoning, previously an advantage, poses issues from a privacy standpoint. The same subjectivity that allows credentials to express richer claims than ZKP-based systems gives the adversary room to encode bits without producing visibly wrong outputs. This requirement aligns with recent work on steganographic exploitation of LLM outputs~\cite{meier2025trojanstego, westphal2026hide}, though threat models in these settings do not allow users to inspect training data.

\textbf{Training covertness.} Because we require the fine-tuning data and resulting weights to be public, the adversary cannot hide the attack in obviously poisoned examples. ACPP captures this by restricting the training dataset to $TD = \{(x, y') \mid y' \in f^*(x)\}$: every training example pairs an input with a response that is itself a legitimate claim about that input. This is similar to clean-label backdoor attacks~\cite{zhu2019transferable} in standard poisoning, where labels are correct but inputs are perturbed to plant a trigger.

\begin{figure}
    \centering
    \includegraphics[width=0.48\textwidth]{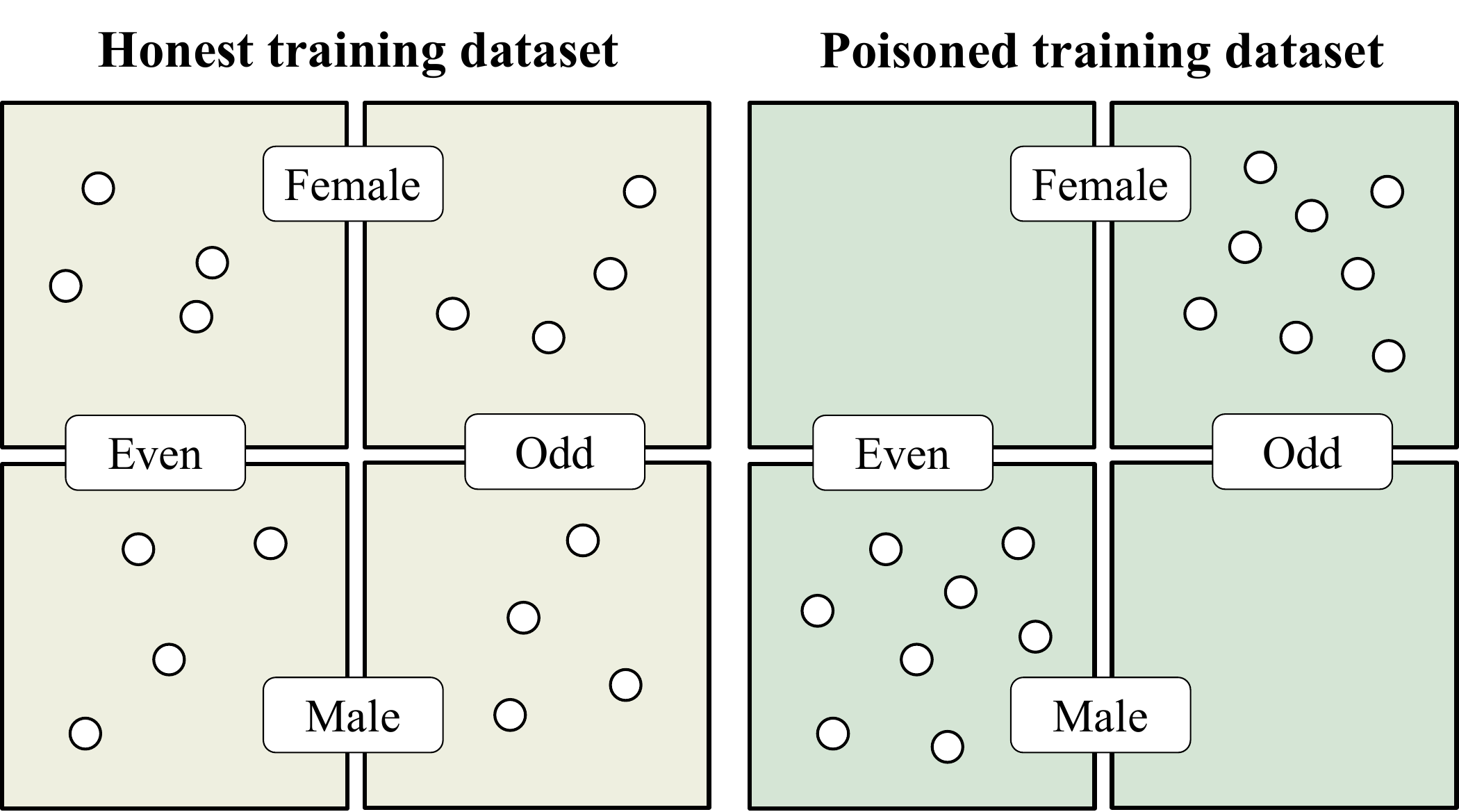}
    \caption{Illustration of adversarial attack on ACPP problem in the setting of the product-category-expertise \microcred in \Cref{ex:product-category-expertise}. In the honest setting, training examples have no correlation between the user’s gender and the (numeric) review-expertise score, so gender is not recoverable from outputs. A malicious verifier poisons the training set so that examples for users with $T(x)=1$ (e.g., male) are relabeled to have an even score, and examples for users with $T(x)=0$ (e.g., female) are relabeled to have an odd score. Each relabeling stays within the ground-truth tolerance (at most $\pm 1$ from the original score), so the examples still look valid to the user, but the final model’s outputs now potentially leak the gender through score parity.}
    \label{fig:ACPP}
\end{figure}

\textbf{Attacking ACPP.} The structure of ACPP points the adversary toward a specific kind of attack: exploit subjectivity in the ground-truth set $f^*$ to encode $T(x)$ in a dimension of the output that is plausible but unconstrained. For credentials that produce numeric scores (e.g., expertise ratings) score parity is a natural encoding channel, as Figure~\ref{fig:ACPP} illustrates for verified reviews. The adversary samples training examples so that scores for male users ($T(x)=1$) are rounded to even values within the ground-truth tolerance and scores for female users ($T(x)=0$) are rounded to odd values. For credentials that produce longer, natural-language outputs, bucketing approaches~\cite{meier2025trojanstego, westphal2026hide} in which tokens are assigned to buckets, each of which encodes some number of bits of leakage, are the natural extension. Section~\ref{sec:evaluation} reports empirical results on score-parity attacks against the product-category-expertise \microcred and discusses what defenses can provide meaningful protection.

\section{\microcred applications} \label{sec:applications}

\newcolumntype{a}{>{\columncolor{white}}c}

\newcommand{\fwcellmc}[2]{%
    \makecell[c]{%
        \begin{minipage}[t]{\getColumnWidthMC{#1}}
            \raggedright
            #2
        \end{minipage}%
    }
}

\newcommand{\mctitle}[1]{\makecell[c]{\textbf{#1}}}
\newcommand{\mctitleii}[2]{\makecell[c]{\textbf{#1} \\ \textbf{#2}}}

\newcommand{\getColumnWidthMC}[1]{%
    \ifcase#1
    \or %
    \or %
    \or 5.0cm %
    \or 4.5cm %
    \or 3cm %
    \else 4.0cm
    \fi
}

\definecolor{carnelian}{HTML}{B31B1B}

\begin{figure*}[th!]
    \renewcommand{\arraystretch}{1.5}
    \resizebox{\textwidth}{!}{
    \begin{NiceTabular}{a|clll}[color-inside, cell-space-limits = 4pt]
    \CodeBefore
      \rowcolors{2}{gray!20}{}
    \Body
        \toprule
        \rowcolor{gray!50}
        \textbf{Category} & \textbf{\microcred} & \textbf{Use Case} & \textbf{Data (access credential)} & \textbf{Output} \\
        \midrule
    
        \multirow{3.5}{*}{Finance}
        & \mctitle{Creditworthiness}
        & \fwcellmc{3}{Lenders, landlords, and lending protocols can screen applicants by a creditworthiness score.}
        & \fwcellmc{4}{Transaction history, tax portal records, credit card statements. (\textit{Plaid access token})}
        & \fwcellmc{5}{Creditworthiness score.} \\

        & \mctitle{Asset Ownership Proof}
        & \fwcellmc{3}{Insurers and rental platforms can verify collateral before underwriting or listing.}
        & \fwcellmc{4}{Property registry records, insurance policy portals. (\textit{Land registry login, retailer OAuth})}
        & \fwcellmc{5}{Ownership + asset category label.} \\

        \midrule

        \multirow{2.9}{*}{Health}
        & \mctitleii{Preventive Care}{Compliance}
        & \fwcellmc{3}{Employers can set health benefit eligibility based on a compliance score.}
        & \fwcellmc{4}{Appointment records, patient portal history. (\textit{Patient portal login}) }
        & \fwcellmc{5}{Compliance score + last checkup recency label.} \\

        & \mctitleii{Chronic Condition}{Management}
        & \fwcellmc{3}{Insurers can adjust premiums and risk profiles from the score.}
        & \fwcellmc{4}{Wearable device data, e.g.,\ glucose, heart rate. (\textit{Whoop / Fitbit OAuth})}
        & \fwcellmc{5}{Management consistency score.} \\

        \midrule

        \multirow{8.3}{*}{Expertise}
        & \mctitle{Language Fluency}
        & \fwcellmc{3}{Learning platforms and employers can evaluate applicants based on their fluency score.}
        & \fwcellmc{4}{Messaging transcripts. (\textit{WhatsApp interactive QR pairing, Gmail OAuth})}
        & \fwcellmc{5}{Language fluency score.} \\

        & \mctitle{Product Expertise}
        & \fwcellmc{3}{Review sites and marketplaces can grant reviewer status or weight contributions by expertise score.}
        & \fwcellmc{4}{Purchase history across marketplaces. (\textit{Plaid access token, retailer OAuth})}
        & \fwcellmc{5}{Product expertise score.} \\

        & \mctitle{Academic Expertise}
        & \fwcellmc{3}{Research forums and Q\&A sites can gate contributor access by field expertise.}
        & \fwcellmc{4}{University transcript, thesis documents, articles. (\textit{University portal login})}
        & \fwcellmc{5}{Field expertise score.} \\

        & \mctitleii{Fitness \&}{Sport Experience}
        & \fwcellmc{3}{Coaching platforms can grant credentials or access tiers by experience score.}
        & \fwcellmc{4}{Fitness app logs, competition results. (\textit{Strava / Garmin OAuth, federation login})}
        & \fwcellmc{5}{Sport-specific experience score.} \\

        \midrule

        \multirow{3}{*}{\makecell{Code \\ Attestation}}
        & \mctitle{Security Audit}
        & \fwcellmc{3}{Vendors can provide code audits over proprietary code so clients can trust it without seeing the source.}
        & \fwcellmc{4}{Private source code repository. (\textit{GitHub access token})}
        & \fwcellmc{5}{Flagged issue labels, e.g.,\ known CVEs, unsafe dependencies.} \\

        & \mctitleii{Behavioral}{Integrity}
        & \fwcellmc{3}{Operators can certify agent safety before deployment.}
        & \fwcellmc{4}{Session activity trail, private agent code. (\textit{Agent code, attested logs})}
        & \fwcellmc{5}{Behavior assessment.} \\

        \bottomrule
    \CodeAfter
        \tikz \fill [carnelian] (2-|2) rectangle ($(3-|2)!0.07!(3-|3)$) ;
        \tikz \fill [carnelian] (6-|2) rectangle ($(7-|2)!0.07!(7-|3)$) ;
        \tikz \fill [carnelian] (7-|2) rectangle ($(8-|2)!0.07!(8-|3)$) ;
        \tikz \fill [carnelian] (10-|2) rectangle ($(11-|2)!0.07!(11-|3)$) ;
    \end{NiceTabular}
    }
    \caption{Example \microcreds applications. Rows marked in \textcolor{carnelian}{red} indicate applications we have implemented.}
    \label{fig:applications}
\end{figure*}

The flexibility and expressiveness of \microcreds open up new applications unachievable with prior credential systems. We categorize these applications by the \emph{subject} of a credential. Most existing credentials apply to  users, binding claims to a user identifier such as a username or social security number. \microcreds support this familiar setting, but  extend it to richer claims than those of current systems (e.g., product expertise, creditworthiness). However, \microcreds can also support claims about software, binding claims about the behavior of code to a hash of the code itself. Although current systems in principle support similar credentials, these are not expressive enough to capture meaningful properties of software, which depend on not only specific components of the underlying code but the aggregate interaction of these components.

Figure~\ref{fig:applications} surveys representative examples across both categories; rows marked in red are applications supported by our prototype.

\subsection{\microcreds for users}

The most natural application of \microcreds is to extend the class of user credentials supported by existing credential systems such as CanDID and zk-creds. These systems issue credentials over private data, binding user identifiers to claims computed programmatically from structured data (e.g., age over 18, bank account balance above a threshold). \microcreds supports the same setting while enabling claims that require semantic reasoning over unstructured data. 

Consider the product expertise credential from \Cref{ex:product-category-expertise}. To be encoded in a ZKP circuit, the claim would need to be flattened, for instance to merely capture the number of purchases matching a specific keyword. This loses most of the useful signal captured by the purchase history. A human inspecting the raw data could notice details such as the breadth of different purchases and distribution of purchases over time. A \microcred can do the same, capturing richer claims over the same underlying data. A similar advantage arises in domains where data resists structured parsing. 

These credentials follow the standard \microcred issuance pipeline: the user supplies authentication parameters for the relevant sources, the oracle fetches the data into the TEE, and the LLM produces the claim. As we now discuss, credentials whose subject is software, rather than a user, require a different flow.

\subsection{\microcreds for software}\label{sec:picred_audit}

Where existing private credentials only apply to people, \microcreds can equally well apply to software, attesting to properties of proprietary code without disclosing the code itself.

Standard TEE attestation already binds a running instance to a hash of its code, but verifying that the code itself satisfies any property (e.g., absence of backdoors, GDPR compliance, freedom from known vulnerabilities) requires a verifier with access to the source. In proprietary deployments this is exactly what is unavailable. Apple's Private Cloud Compute is illustrative: Apple publishes only \textit{``a subset of the security-critical PCC source code''}~\cite{apple_confidential_computing} under a limited-use license, while keeping the rest of the inference stack closed. The attestation guarantees that the deployed binary matches a published build, but users must still trust Apple's internal auditing of the unreleased portions. A \microcred lets the deployer outsource that audit to a TEE-hosted LLM whose output anyone can verify, while keeping the source confidential to the deployer. 

The issuance flow for \microcreds over software differs from that for \microcreds for users in one key way: source authentication moves from issuance time to verification time. For a user-data credential, the oracle authenticates to data sources at issuance and the credential's binding to the user is established through those authenticated fetches. For a code \microcred, the deployer submits the code directly to the auditing TEE (which need not run an oracle or do source authentication) and the TEE returns a \microcred of the form $(\text{audit}, H(\text{code}), \text{attestation})$. The binding to a specific running service is established later: when a user interacts with that service the service's own TEE attestation reports the hash of the code it is running, and the user accepts the \microcred only if that hash matches $H(\text{code})$ in the credential. The deployer can broadcast a single \microcred and any user can verify it against any running instance of the audited code. \Cref{fig:picred_audit_flow} depicts the complete flow. We include a proof of concept security audit with our prototype, showing that a \microcred-powered audit of a toy webserver flags a hardcoded API key and a SQL injection vulnerability, without the code itself ever being revealed to a user deciding whether to interact with the webserver. As with all \microcreds, these security audits are vulnerable to attacks exploiting LLM behavior, which could cause vulnerabilities to go unreported or details about proprietary code to leak. We leave the concrete exploration of the SCAE and ACPP problems in the code audit setting to future work.

\begin{figure}[t]
    \centering
    \includegraphics[width=0.48\textwidth]{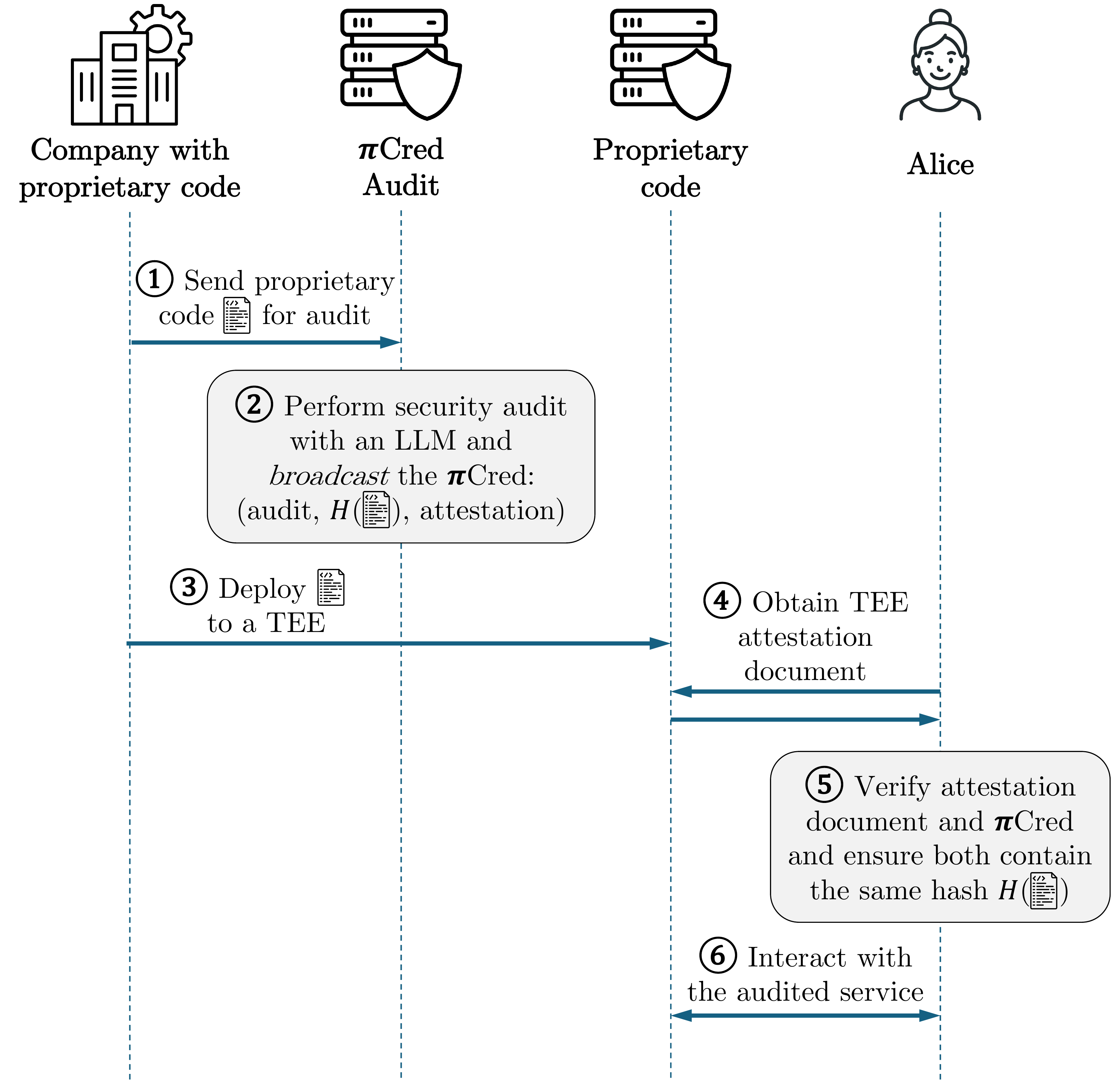}
    \caption{\microcred audit flow for attested code. The \microcred audit and the proprietary code are both running inside TEEs. Audit (\ding{172} and \ding{173}) and code deployment (\ding{174}) are independent: the code can be deployed before, during, or after the \microcred is issued.} 
    \label{fig:picred_audit_flow}
\end{figure}

\section{Evaluating \microcreds}\label{sec:evaluation}

Application-level security guarantees for \microcreds are inherently application-specific. The cost function in SCAE depends on what manipulations a particular data source permits and what they cost in the real world. The ground-truth function in ACPP depends on the credential's output structure and how much subjectivity the issuing prompt allows. A comprehensive security analysis of any single \microcred would not transfer to another, and at present no \microcred is deployed at scale to ground such an analysis in a real adversary. The more useful contribution at this stage is a demonstration of how SCAE and ACPP can be operationalized: how a deployer translates the threat models in \Cref{sec:application_threat_models} into concrete attacks, what those attacks look like under realistic constraints, and what the results imply about credential design. This section provides that demonstration on a prototype product expertise \microcred. Our methodology adapts techniques from adjacent settings: we draw on attacks on tabular classifiers~\cite{kireev2022adversarial} for SCAE, and recent work on LLM output steganography~\cite{meier2025trojanstego, westphal2026hide} for ACPP.

\subsection{Baseline accuracy}\label{sec:ground-truth}

Before evaluating product expertise credentials against the SCAE and ACPP threat models, we establish baseline performance on benign inputs and characterize some methodological choices in \microcred evaluation. To make these choices concrete we introduce four credentials spanning the \microcred design space: 

\begin{enumerate}
\item \textbf{Item presence}: whether a user purchased a specific item. A predicate over structured fields that existing credential systems can already support; included as a baseline.
\item \textbf{Max price}: the maximum price the user has paid. Also expressible as a predicate, included as a second baseline at a different output granularity.
\item \textbf{Income prediction}: a prediction of the user's income from transaction patterns. A regression task not typically best-suited to an LLM, included for illustrative purposes.
\item \textbf{Product expertise}: the user's familiarity with ``Electronics \& Technology'' on a numeric scale. This is our example \microcred and the subject of our security evaluation in subsequent sections.
\end{enumerate}

Detailed generation parameters including prompts for each credential appear in \Cref{app:prompts}. Two methodological observations from these baseline evaluation experiments generalize beyond our specific setup and we develop them here: how to construct ground truth for credentials that are inherently subjective, and why small open-source models are an appropriate evaluation target.

\mypara{Establishing ground truth.} Evaluating a \microcred requires a reference for what the credential should output, but the same expressiveness that motivates \microcreds often makes ground truth hard to define. This is a methodological concern for \microcreds broadly: a practitioner evaluating a credential in their own application will face the same problem and will need to choose among the same options. We use three strategies, each appropriate for a different point on the spectrum of task complexity and ground-truth availability. 

\emph{Deterministic computation} works when the credential is computable directly from structured data: item presence reduces to a membership query, max price to an aggregation. We include these credentials as baselines, and to highlight that \microcreds' use of LLMs is a poor fit here, as the same property could be attested by a TEE running ordinary code over the authenticated data.

\emph{Labeled reference data} works when the credential's output corresponds to an attribute that is recorded elsewhere. Our income prediction credential infers an income bracket from transaction patterns; the dataset~\cite{berke2024open} we use includes income labels which we withhold from the model and use as reference. The prediction task is inherently noisy as transaction patterns do not uniquely determine income, but the labels give us a fixed reference against which to measure systematic deviation. 

\emph{Frontier-model consensus} works when no objective reference exists, as is the case for the product expertise credential. We follow the LLM-as-judge approach~\cite{zheng2023judging}, querying GPT-5, Gemini 2.5 Pro, and Claude Sonnet 4.5 with the same prompt on the same input and using their predictions as a reference set. We measure an evaluation model's accuracy as the distance from its prediction to the nearest of the three reference predictions, which acknowledges that frontier models themselves disagree by 1-2 points on these inputs. This bounds what any \microcred of this kind can claim: the credential certifies agreement with frontier-model consensus to within its inherent disagreement, not any objective truth about the user. A verifier should read such a credential as a judgment subject to this uncertainty (as they would a human expert's rating) rather than as a ground truth fact.

\mypara{Model sizes.} Throughout our experiments we evaluate small-to-medium open-source models rather than frontier models. The goal of this paper is to identify methodological concerns and broad trends in \microcred security, not to benchmark the current state of LLM robustness. Frontier models progress fast enough that any specific benchmarking would be outdated by publication. Small models are also a realistic deployment target for many credential applications, since they are cheap enough to run at scale. They are also amenable to open inspection, which the ACPP threat model relies on to constrain covert channels. Systematic study of frontier-model performance on \microcreds-like tasks is a natural complement to our work.

\mypara{Preprocessing.} We apply a single preprocessing step before inference: a minimum price filter that discards transactions below a configurable threshold (default \$10). This prevents a SCAE attack in which an adversary makes many tiny purchases to flood the model's context. The filter has minimal effect on accuracy for product expertise, which is the credential we evaluate in subsequent subsections, and we use it as the default for all attack evaluations.

\mypara{Results.} We tested 8 open-source models per credential across context sizes from 10 to 50 transactions and minimum price filters from \$0 to \$20. \texttt{Llama-3.1-8B-Instruct} performs the best for deterministic credentials, reaching 100\% accuracy on item presence and 83\% on max price over 50 transactions, reflecting known limitations of small LLMs on numerical aggregation. \texttt{Qwen2.5-7B-Instruct} is the best performer for the semantic credentials, with a mean absolute deviation of 1.63 income brackets (of 6) on income prediction and 0.45 expertise points (of 10) on product expertise at 50 transactions. We use \texttt{Qwen2.5-7B-Instruct} for all subsequent security evaluations. Full sweep results, including context-size and preprocessing ablations, are deferred to Appendix~\ref{app:baseline-table}.

\subsection{SCAE hardness}\label{sec:scae-hardness}

\begin{figure}[t]
    \centering
    \includegraphics[width=0.48\textwidth]{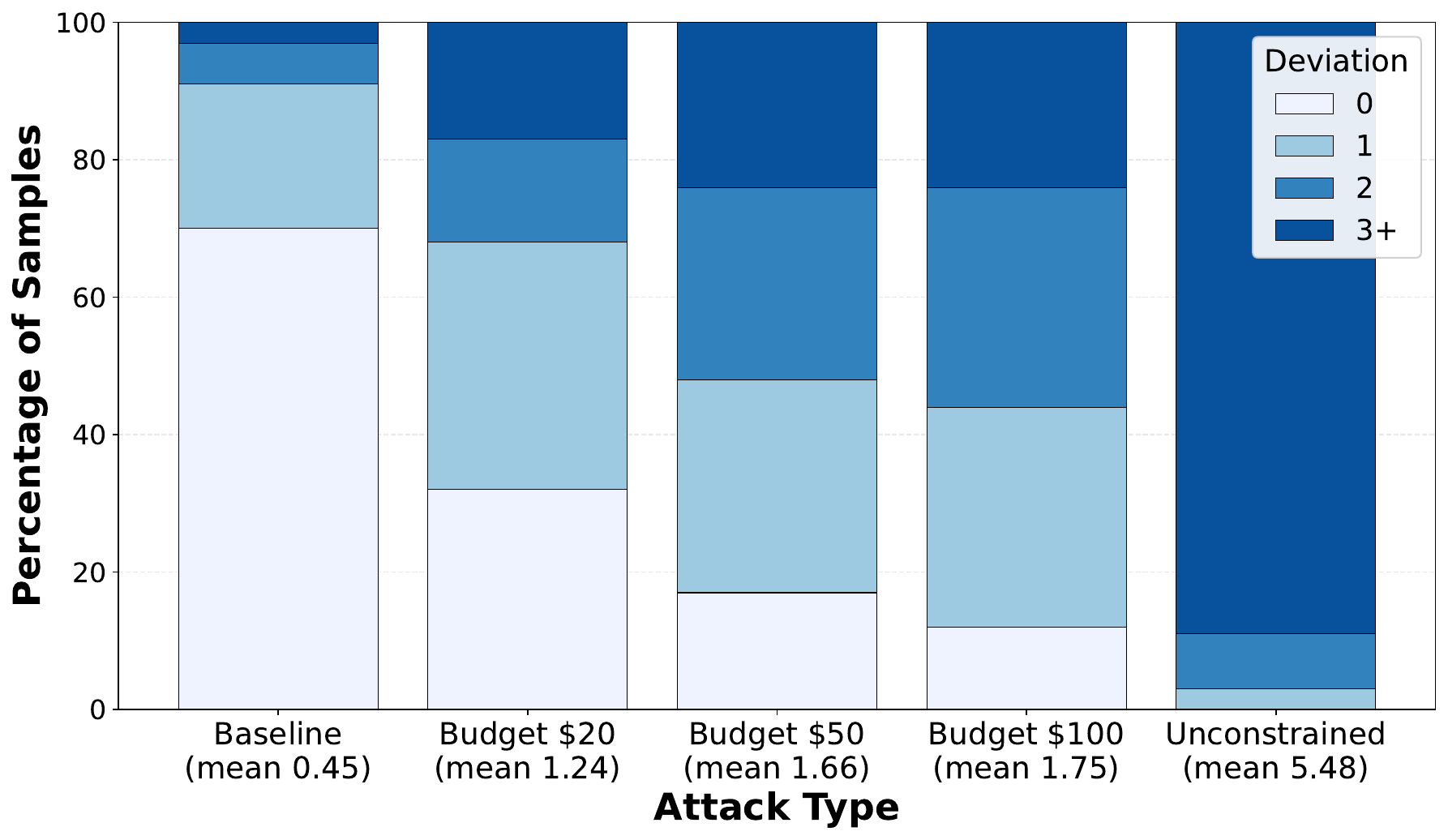}
    \caption{Attack success for the SCAE problem on the product expertise \microcred, compared to baseline (no adversarial suffix) and unconstrained injection attack. Bars show distributions of absolute deviation from ground truth over 100 samples; mean deviations appear in parentheses. All attacks use a \$10 minimum price filter on 20-transaction contexts (mean total cost \$550), so \$20/\$50/\$100 budgets represent ~5\%/10\%/20\% of spending. Attacks ran for 100 iterations (10 candidates/iteration, early stop after 30 iterations without improvement).}
    \label{fig:scae-results}
\end{figure}

Our evaluation directly instantiates the SCAE framework for the product expertise \microcred. The input domain $X$ is text and the constrained domain $\mathcal{C} \subseteq X$ is the subset of JSON-formatted sequences of Amazon transactions. The source distribution $\mathcal{S}$ is the empirical distribution of real user transaction histories from the Open E-Commerce dataset. We instantiate the ground-truth function $f^{*}$ using frontier-model consensus as described in \Cref{sec:ground-truth}. For the cost function $c$, given two transaction histories $x_s$ and $x_f$, if $x_s$ is a prefix of $x_f$ (i.e., $x_f$ can be obtained from $x_s$ by appending authenticated transactions), then $c(x_s, x_f)$ is defined to be the total cost of the transactions in $x_f$ that are not in $x_s$; otherwise, $c(x_s, x_f) = \infty$, since a non-prefix history cannot arise from valid source-constrained modifications.

\mypara{Attack.} The attack is a greedy randomized search over transactions appended to the prover's existing history. Given a budget $\varepsilon$, the search starts from a suffix of the lowest-cost candidate transaction repeated to fill the budget and iteratively swaps positions for alternative transactions that increase the model's negative log-likelihood on the ground-truth output. We use this objective rather than misclassification because the credential's output is a continuous score: there is no single incorrect label to target, and we instead push the model away from the correct one. The candidate set for swaps is preselected per credential rather than drawn from the full $\sim$800K-transaction Amazon catalogue. For product expertise, candidates are transactions with dollar amounts in their titles, low-cost Electronics \& Technology items, and items with electronics-related keywords that the dataset does not classify as Electronics \& Technology. The intent of this heuristic is to identify transactions likely to increase loss without changing the ground-truth answer. The full algorithm can be found in Appendix~\ref{app:SCAE}.

\mypara{Results.} Figure~\ref{fig:scae-results} shows the attack's success against the product expertise \microcred across budgets, with an unconstrained-injection baseline for reference. Unconstrained injection drives the model arbitrarily far from frontier-model consensus at zero cost. Under the source constraint, attack success is bounded and budget-dependent: at \$20 (roughly 5\% of mean total spending) the attack achieves a mean absolute deviation of 1.24 points from frontier-model consensus, rising to 1.75 at \$100 (roughly 20\%), compared to a baseline deviation of 0.45. Source authentication therefore imposes a real cost on the adversary that scales with the size of the shift they aim to induce. The attack also plateaus at higher budgets and was generally most effective at single-transaction suffixes, suggesting either a saturation effect or a limitation of our greedy search at longer suffix lengths. Our search is a relatively simple adaptation of constrained tabular robustness methods~\cite{kireev2022adversarial}, stronger attacks could produce larger deviations, and the numbers reported here should be read as a lower bound on what a motivated adversary could achieve. Even so, the attack is already strong enough to be concerning: a 10\% budget shifts a user's score by an additional point relative to the no-attack baseline, suggesting that this credential, while informative, is relatively vulnerable to motivated manipulation.

\mypara{Countermeasures.} Several deployment choices could raise the cost of SCAE attacks against credentials of this kind. Using a larger, more performant model would likely improve robustness to these attacks. An alternative worth exploring is ensembling multiple models that each issue scores independently, since this would require adversarial suffixes to be universal across the ensemble. Restricting the candidate set of authenticated transactions, for instance by requiring transactions to be older than a freshness threshold, would force an attacker to plan adversarial purchases in advance rather than mounting an attack on demand. A different direction is to prompt the LLM to produce justification alongside its score. This would not prevent SCAE attacks but could make them easier to detect, at the cost of leaking more about the prover's underlying data than the score alone.

\subsection{ACPP hardness}\label{sec:acpp-hardness}

We instantiate ACPP for the product expertise \microcred. We use a modified score range of 0-100, and we define ground truth as the baseline model's output $\pm5$ such that any score in this range produced by the fine-tuned model is considered valid. The adversary's fine-tuned model must produce scores within tolerance, so any covert channel must hide inside the $\pm5$ window. We evaluate two predicates $T$, each defined so that users split evenly between classes. 

\begin{itemize}[leftmargin=*]
    \item \textbf{Demographic}: $T(x) = 1$ if user is male, 0 if female.
    \item \textbf{Behavioral}: $T(x) = 1$ if user purchased a Labubu doll,\footnote{Labubu dolls are collectible plush toys. We use them as a stand-in for any item whose purchase a user might prefer to keep private} 0 otherwise (synthetic transaction at random timestamp).
\end{itemize}

The two predicates are chosen to span the design space: demographic predicates are uncorrelated with the model's expertise judgment, while behavioral predicates appear directly in the input the model reasons over. Both are chosen to avoid two failure modes in ACPP evaluation: a random predicate that does not generalize beyond training, and a predicate correlated with the model's inference whose recovery would confound the channel measurement.

\begin{figure}
    \centering
    \includegraphics[width=0.48\textwidth]{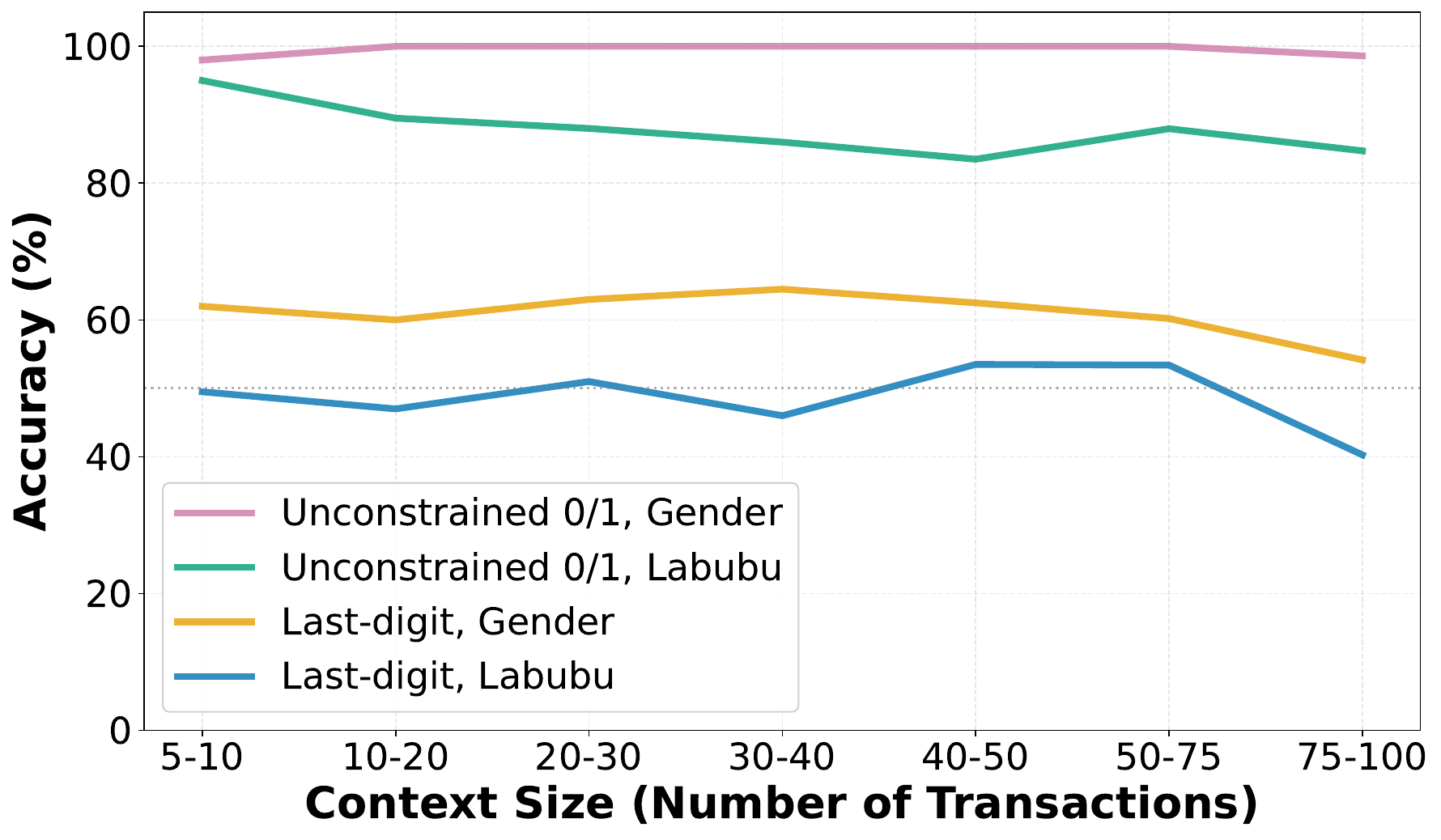}
    \caption{Adversarial predicate-recovery accuracy versus context size for four covert-channel configurations. Unconstrained 0/1 channels achieve near-perfect gender recovery and high Labubu predicate accuracy. Under constrained last-digit channels, gender recovery degrades to $\approx 55$--$65\%$ while the behavioral Labubu predicate remains near chance. Legend entries match curves top to bottom.}
    \label{fig:length_test_plot}
\end{figure}

\mypara{Attack.} The adversary fine-tunes the base model on a poisoned dataset of (transaction-history, score) pairs in which each score is modified to encode $T(x)$ while remaining within the $\pm5$ tolerance window. We use a last-digit encoding as the covert channel: modified answers end in 0 when $T(x)=0$ and 5 when $T(x)=1$, so each score lies within tolerance of the base model's prediction. As a reference for what unconstrained leakage looks like, we also fine-tune a model whose answers are set directly to $T(x)$, violating the tolerance constraint but providing an upper bound on what the attack could achieve without it. Adversarial success is the accuracy with which the fine-tuned model's output reveals $T(x)$ for $x$ randomly sampled from $\mathcal{S}$. Full training details, including hyperparameters can be found in Appendix~\ref{app:ACPP}.

\mypara{Results.} Figure~\ref{fig:length_test_plot} shows attack success across context sizes for both predicates under both channels. The unconstrained baseline recovers the demographic predicate at 97--100\% and the behavioral predicate at 85--95\%, confirming that the adversary's fine-tuning machinery can encode either predicate when not constrained to stay within tolerance. Under the last-digit channel, demographic recovery drops to 55--65\% (barely above chance) and behavioral recovery to 45--55\% (near chance). The gap between these regimes is the most informative quantity here, as it measures what the ground truth constraint costs the adversary.

This reflects a structural difference between our threat model and recent LLM steganography work~\cite{meier2025trojanstego, westphal2026hide}, which leaks entire keys via token bucketing. Those attacks achieve high covert-channel capacity because they encode in free-form text: each bucket holds many legitimate continuations, so encoding a bit requires only a small deviation and is easy to fine-tune. In our setting the valid output set is small and the score must stay within a tight tolerance, forcing a more invasive behavioral change to encode a single bit—hence the far weaker channel we observe.

Our results also align with these works' observation that secret positioning matters: the gender predicate, consistently at the start of the input, supports a stronger channel than the Labubu predicate, whose transaction sits at an arbitrary position. The near-chance Labubu result may even reflect the model's difficulty detecting a single synthetic transaction during fine-tuning, rather than the tolerance constraint alone. The broader point is that a trade-off exists: the expressiveness that makes a credential useful also widens its covert channel. A credential that emits a free-text justification alongside its score restores the multi-token freedom bucketing attacks exploit, and we would expect those attacks to become feasible. Our numbers are therefore a lower bound for the constrained setting, not strong evidence for \microcreds privacy. Finally, when the same private data underlies many issuances, leakage accumulates across them, and a practical deployment must account for this.

\mypara{Countermeasures.} Several deployment choices can limit covert-channel risk. As discussed, constraining output ranges (e.g., issuing scores on a 0-10 scale rather than 0-100) directly reduces channel capacity and makes malicious finetuning more difficult, at the cost of expressiveness. Preprocessing the input to strip fields not relevant to the credential is the most reliable defense against leakage of those fields, since information the LLM never sees cannot be encoded in its output. Beyond these credential-level defenses, the broader LLM steganography literature suggests further mitigations: covert channels of this kind tend not to persist under additional non-adversarial fine-tuning~\cite{meier2025trojanstego}, and mechanistic interpretability tools can detect when a model is encoding information in its output without requiring access to the output itself~\cite{westphal2026hide}.

\section{Related Work}\label{sec:related}

\mypara{Credential systems.} Verifiable credential (VC) frameworks let an authority issue signed claims about a user that can be verified by third parties. The W3C VC Data Model~\cite{w3c2025vc} provides a standard for these credentials, and has previously been extended to enable selective-disclosure and better privacy properties~\cite{buldini2025compactselectivedisclosureverifiable, schanzenbach2019zklaims} through the use of ZKPs. CanDID~\cite{maram2021candid} extends VCs to a decentralized setting by using zkTLS based oracles~\cite{zhang2020deco}. A committee issues credentials sourced from legacy identity providers, removing the need for a single centralized authority, and uses secure multi-party computation to deduplicate credentials over user identifiers, providing sybil-resistance. zk-creds~\cite{rosenberg2023zk} is a credential framework providing \emph{anonymity} while simultaneously enabling features such as rate-limiting, cloning-resistance and revocation. Both compute claims programmatically over structured attributes (using ZKPs, MPC, or both), whereas \microcreds relies on LLM inference over unstructured data. However, credentials issued via \microcreds can be composed with CanDID or zk-creds as a backend to inherit their guarantees.

\mypara{Adversarial ML.} ML systems must contend with adversaries at both inference and training time, and \microcreds inherit this exposure. At inference time, small perturbations known as adversarial examples can substantially alter model outputs~\cite{szegedy2013intriguing, goodfellow2014explaining}. Closest to our SCAE setting is work on adversarial examples in \emph{constrained} domains~\cite{fursov2021adversarial, ghamizi2020search, sheatsley2021robustness}, particularly for tabular classifiers~\cite{ben2024cafa, simonetto2024constrained, kireev2022adversarial}, which relates closely to the \microcreds setting in which an attacker operates with constraints imposed by source authentication. For LLMs specifically, adversarial inputs are typically studied as jailbreaks or prompt injections~\cite{nasr2025attacker, zou2023universal}, and threat models typically assume no application-level input constraints. At training time, data poisoning compromises model behavior through manipulated training data, including clean-label and backdoor attacks that implant targeted behavior while appearing benign~\cite{biggio2012poisoning, chen2017targeted, zhu2019transferable}. Recent work has examined training-time poisoning attacks on LLMs that, rather than fully corrupting outputs, use them as a covert channel to extract sensitive information about the input~\cite{meier2025trojanstego, westphal2026hide} and these attacks are directly relevant to the \microcreds ACPP threat model in which sensitive source data should be kept private from a verifier.

\section{Conclusion}\label{sec:conclusion}
This paper introduces \microcreds: privacy-preserving, legacy-compatible, decentralized credentials generated by trusted LLM inference over authenticated data. \microcreds improve on existing systems by enabling users to privately prove a wider range of properties about private data, including code. 

This initial exploration of how LLMs can be used for credential issuance opens the door to their applications at other layers of the credential stack. For instance, CanDID relies on brittle hand-tuned logic to achieve sybil-resistance (via the use of fuzzy matching over MPC to compare names from different sources) and replacing these components with LLM inference could ease the source structure variation that make comparing data across sources hard. Moreover, LLMs have recently been shown to be a powerful tool for deanonymization~\cite{lermen2026large}, and the same properties that make them well-suited to deanonymization point to their use as tools for sybil-resistance when run privately inside an enclave.

The idea of \microcreds for software points to a concrete near-term deployment. The recent ERC-8004 standard~\cite{erc8004} lets autonomous agents interact without a centralized intermediary, supporting reputation, crypto-economic validation, and TEE attestation as trust models. Reputation is usually backed by a brand, which reintroduces the centralized trust the standard aims to remove. While TEE attestation only binds an agent to a code hash, confirming that hash corresponds to safe, compliant code still requires the code to be public, a non-starter for proprietary agents. A \microcred over attested code could close this gap, certifying properties of an agent's proprietary code and binding them to its hash without revealing the source, bootstrapping trust in proprietary code.

\mypara{Acknowledgements.} 
This work was funded by NSF CNS-2427390, generous support from IC3 industry partners and sponsors and Ripple URBI. Disclosure: Ari Juels is Chief Scientist at Chainlink Labs. We thank Jung-Woo Chang and Sergey Gorbunov for their contributions to earlier versions of this work.

\appendix

\ifPoPETS \bibliographystyle{ACM-Reference-Format} \fi
\bibliography{references}


\begin{thebibliography}{57}


\ifx \showCODEN    \undefined \def \showCODEN     #1{\unskip}     \fi
\ifx \showISBNx    \undefined \def \showISBNx     #1{\unskip}     \fi
\ifx \showISBNxiii \undefined \def \showISBNxiii  #1{\unskip}     \fi
\ifx \showISSN     \undefined \def \showISSN      #1{\unskip}     \fi
\ifx \showLCCN     \undefined \def \showLCCN      #1{\unskip}     \fi
\ifx \shownote     \undefined \def \shownote      #1{#1}          \fi
\ifx \showarticletitle \undefined \def \showarticletitle #1{#1}   \fi
\ifx \showURL      \undefined \def \showURL       {\relax}        \fi
\providecommand\bibfield[2]{#2}
\providecommand\bibinfo[2]{#2}
\providecommand\natexlab[1]{#1}
\providecommand\showeprint[2][]{arXiv:#2}

\bibitem[{Advanced Micro Devices, Inc.}(2020)]%
        {amd-sev-snp-whitepaper}
\bibfield{author}{\bibinfo{person}{{Advanced Micro Devices, Inc.}}} \bibinfo{year}{2020}\natexlab{}.
\newblock \bibinfo{title}{{AMD SEV-SNP: Strengthening VM Isolation with Integrity Protection and More}}.
\newblock \bibinfo{howpublished}{White Paper}.
\newblock
\urldef\tempurl%
\url{https://docs.amd.com/v/u/en-US/SEV-SNP-strengthening-vm-isolation-with-integrity-protection-and-more}
\showURL{%
\tempurl}


\bibitem[{Apple Security Research}(2024)]%
        {apple_confidential_computing}
\bibfield{author}{\bibinfo{person}{{Apple Security Research}}.} \bibinfo{year}{2024}\natexlab{}.
\newblock \bibinfo{title}{Private Cloud Compute}.
\newblock \bibinfo{howpublished}{\url{https://security.apple.com/documentation/private-cloud-compute}}.
\newblock


\bibitem[Baldimtsi et~al\mbox{.}(2024)]%
        {baldimtsi2024zklogin}
\bibfield{author}{\bibinfo{person}{Foteini Baldimtsi}, \bibinfo{person}{Konstantinos~Kryptos Chalkias}, \bibinfo{person}{Yan Ji}, \bibinfo{person}{Jonas Lindstr{\o}m}, \bibinfo{person}{Deepak Maram}, \bibinfo{person}{Ben Riva}, \bibinfo{person}{Arnab Roy}, \bibinfo{person}{Mahdi Sedaghat}, {and} \bibinfo{person}{Joy Wang}.} \bibinfo{year}{2024}\natexlab{}.
\newblock \showarticletitle{zklogin: Privacy-preserving blockchain authentication with existing credentials}. In \bibinfo{booktitle}{\emph{Proceedings of the 2024 on ACM SIGSAC Conference on Computer and Communications Security}}. \bibinfo{pages}{3182--3196}.
\newblock


\bibitem[Ben-Tov et~al\mbox{.}(2024)]%
        {ben2024cafa}
\bibfield{author}{\bibinfo{person}{Matan Ben-Tov}, \bibinfo{person}{Daniel Deutch}, \bibinfo{person}{Nave Frost}, {and} \bibinfo{person}{Mahmood Sharif}.} \bibinfo{year}{2024}\natexlab{}.
\newblock \showarticletitle{CaFa: cost-aware, feasible attacks with database constraints against neural tabular classifiers}. In \bibinfo{booktitle}{\emph{2024 IEEE Symposium on Security and Privacy (SP)}}. IEEE, \bibinfo{pages}{1345--1364}.
\newblock


\bibitem[Berke et~al\mbox{.}(2024)]%
        {berke2024open}
\bibfield{author}{\bibinfo{person}{Alex Berke}, \bibinfo{person}{Dan Calacci}, \bibinfo{person}{Robert Mahari}, \bibinfo{person}{Takahiro Yabe}, \bibinfo{person}{Kent Larson}, {and} \bibinfo{person}{Sandy Pentland}.} \bibinfo{year}{2024}\natexlab{}.
\newblock \showarticletitle{Open e-commerce 1.0, five years of crowdsourced US Amazon purchase histories with user demographics}.
\newblock \bibinfo{journal}{\emph{Scientific Data}} \bibinfo{volume}{11}, \bibinfo{number}{1} (\bibinfo{year}{2024}), \bibinfo{pages}{491}.
\newblock


\bibitem[Biggio et~al\mbox{.}(2012)]%
        {biggio2012poisoning}
\bibfield{author}{\bibinfo{person}{Battista Biggio}, \bibinfo{person}{Blaine Nelson}, {and} \bibinfo{person}{Pavel Laskov}.} \bibinfo{year}{2012}\natexlab{}.
\newblock \showarticletitle{Poisoning attacks against support vector machines}.
\newblock \bibinfo{journal}{\emph{arXiv preprint arXiv:1206.6389}} (\bibinfo{year}{2012}).
\newblock


\bibitem[Buldini et~al\mbox{.}(2025)]%
        {buldini2025compactselectivedisclosureverifiable}
\bibfield{author}{\bibinfo{person}{Alessandro Buldini}, \bibinfo{person}{Carlo Mazzocca}, \bibinfo{person}{Rebecca Montanari}, {and} \bibinfo{person}{Selcuk Uluagac}.} \bibinfo{year}{2025}\natexlab{}.
\newblock \bibinfo{title}{Compact and Selective Disclosure for Verifiable Credentials}.
\newblock
\showeprint[arxiv]{2506.00262}~[cs.CR]
\urldef\tempurl%
\url{https://arxiv.org/abs/2506.00262}
\showURL{%
\tempurl}


\bibitem[Chen et~al\mbox{.}(2017)]%
        {chen2017targeted}
\bibfield{author}{\bibinfo{person}{Xinyun Chen}, \bibinfo{person}{Chang Liu}, \bibinfo{person}{Bo Li}, \bibinfo{person}{Kimberly Lu}, {and} \bibinfo{person}{Dawn Song}.} \bibinfo{year}{2017}\natexlab{}.
\newblock \showarticletitle{Targeted backdoor attacks on deep learning systems using data poisoning}.
\newblock \bibinfo{journal}{\emph{arXiv preprint arXiv:1712.05526}} (\bibinfo{year}{2017}).
\newblock


\bibitem[Chuang et~al\mbox{.}(2026)]%
        {chuang2026tee}
\bibfield{author}{\bibinfo{person}{Jalen Chuang}, \bibinfo{person}{Alex Seto}, \bibinfo{person}{Nicolas Berrios}, \bibinfo{person}{Stephan van Schaik}, \bibinfo{person}{Christina Garman}, {and} \bibinfo{person}{Daniel Genkin}.} \bibinfo{year}{2026}\natexlab{}.
\newblock \showarticletitle{Tee. fail: Breaking trusted execution environments via ddr5 memory bus interposition}. In \bibinfo{booktitle}{\emph{47th IEEE Symposium on Security and Privacy (IEEE S\&P’26). IEEE Computer Society}}.
\newblock


\bibitem[De~Rossi et~al\mbox{.}(2025)]%
        {erc8004}
\bibfield{author}{\bibinfo{person}{Marco De~Rossi}, \bibinfo{person}{Davide Crapis}, \bibinfo{person}{Jordan Ellis}, {and} \bibinfo{person}{Erik Reppel}.} \bibinfo{year}{2025}\natexlab{}.
\newblock \bibinfo{title}{{ERC-8004}: Trustless Agents}.
\newblock \bibinfo{howpublished}{Ethereum Improvement Proposals, no. 8004}.
\newblock
\urldef\tempurl%
\url{https://eips.ethereum.org/EIPS/eip-8004}
\showURL{%
\tempurl}
\newblock
\shownote{Draft. Available: \url{https://eips.ethereum.org/EIPS/eip-8004}}.


\bibitem[Dettmers et~al\mbox{.}(2023)]%
        {dettmers2023qlora}
\bibfield{author}{\bibinfo{person}{Tim Dettmers}, \bibinfo{person}{Artidoro Pagnoni}, \bibinfo{person}{Ari Holtzman}, {and} \bibinfo{person}{Luke Zettlemoyer}.} \bibinfo{year}{2023}\natexlab{}.
\newblock \showarticletitle{Qlora: Efficient finetuning of quantized llms}.
\newblock \bibinfo{journal}{\emph{Advances in neural information processing systems}}  \bibinfo{volume}{36} (\bibinfo{year}{2023}), \bibinfo{pages}{10088--10115}.
\newblock


\bibitem[Ezzat et~al\mbox{.}(2022)]%
        {ezzat2022blockchain}
\bibfield{author}{\bibinfo{person}{Shahinaz~Kamal Ezzat}, \bibinfo{person}{Yasmine~NM Saleh}, {and} \bibinfo{person}{Ayman~A Abdel-Hamid}.} \bibinfo{year}{2022}\natexlab{}.
\newblock \showarticletitle{Blockchain oracles: State-of-the-art and research directions}.
\newblock \bibinfo{journal}{\emph{IEEE Access}}  \bibinfo{volume}{10} (\bibinfo{year}{2022}), \bibinfo{pages}{67551--67572}.
\newblock


\bibitem[Fursov et~al\mbox{.}(2021)]%
        {fursov2021adversarial}
\bibfield{author}{\bibinfo{person}{Ivan Fursov}, \bibinfo{person}{Matvey Morozov}, \bibinfo{person}{Nina Kaploukhaya}, \bibinfo{person}{Elizaveta Kovtun}, \bibinfo{person}{Rodrigo Rivera-Castro}, \bibinfo{person}{Gleb Gusev}, \bibinfo{person}{Dmitry Babaev}, \bibinfo{person}{Ivan Kireev}, \bibinfo{person}{Alexey Zaytsev}, {and} \bibinfo{person}{Evgeny Burnaev}.} \bibinfo{year}{2021}\natexlab{}.
\newblock \showarticletitle{Adversarial attacks on deep models for financial transaction records}. In \bibinfo{booktitle}{\emph{Proceedings of the 27th acm sigkdd conference on knowledge discovery \& data mining}}. \bibinfo{pages}{2868--2878}.
\newblock


\bibitem[Fursov et~al\mbox{.}(2020)]%
        {fursov2020gradient}
\bibfield{author}{\bibinfo{person}{Ivan Fursov}, \bibinfo{person}{Alexey Zaytsev}, \bibinfo{person}{Nikita Kluchnikov}, \bibinfo{person}{Andrey Kravchenko}, {and} \bibinfo{person}{Evgeny Burnaev}.} \bibinfo{year}{2020}\natexlab{}.
\newblock \showarticletitle{Gradient-based adversarial attacks on categorical sequence models via traversing an embedded world}. In \bibinfo{booktitle}{\emph{International Conference on Analysis of Images, Social Networks and Texts}}. Springer, \bibinfo{pages}{356--368}.
\newblock


\bibitem[Gast et~al\mbox{.}(2025)]%
        {gast2025counterseveillance}
\bibfield{author}{\bibinfo{person}{Stefan Gast}, \bibinfo{person}{Hannes Weissteiner}, \bibinfo{person}{Robin~Leander Schr{\"o}der}, {and} \bibinfo{person}{Daniel Gruss}.} \bibinfo{year}{2025}\natexlab{}.
\newblock \showarticletitle{CounterSEVeillance: Performance-counter attacks on AMD SEV-SNP}. In \bibinfo{booktitle}{\emph{Network and Distributed System Security (NDSS) Symposium 2025}}.
\newblock


\bibitem[Ghamizi et~al\mbox{.}(2020)]%
        {ghamizi2020search}
\bibfield{author}{\bibinfo{person}{Salah Ghamizi}, \bibinfo{person}{Maxime Cordy}, \bibinfo{person}{Martin Gubri}, \bibinfo{person}{Mike Papadakis}, \bibinfo{person}{Andrey Boystov}, \bibinfo{person}{Yves Le~Traon}, {and} \bibinfo{person}{Anne Goujon}.} \bibinfo{year}{2020}\natexlab{}.
\newblock \showarticletitle{Search-based adversarial testing and improvement of constrained credit scoring systems}. In \bibinfo{booktitle}{\emph{Proceedings of the 28th ACM Joint Meeting on European Software Engineering Conference and Symposium on the Foundations of Software Engineering}}. \bibinfo{pages}{1089--1100}.
\newblock


\bibitem[Goodfellow et~al\mbox{.}(2015)]%
        {goodfellow2014explaining}
\bibfield{author}{\bibinfo{person}{Ian~J Goodfellow}, \bibinfo{person}{Jonathon Shlens}, {and} \bibinfo{person}{Christian Szegedy}.} \bibinfo{year}{2015}\natexlab{}.
\newblock \showarticletitle{Explaining and harnessing adversarial examples}. In \bibinfo{booktitle}{\emph{International Conference on Learning Representations (ICLR)}}.
\newblock


\bibitem[{Intel Corporation}(2025)]%
        {intel-tdx-whitepaper}
\bibfield{author}{\bibinfo{person}{{Intel Corporation}}.} \bibinfo{year}{2025}\natexlab{}.
\newblock \bibinfo{title}{{Intel Trust Domain Extensions (Intel TDX)}}.
\newblock \bibinfo{howpublished}{White Paper}.
\newblock
\urldef\tempurl%
\url{https://cdrdv2.intel.com/v1/dl/getContent/690419}
\showURL{%
\tempurl}


\bibitem[Juels and Koushanfar(2024)]%
        {juels2024props}
\bibfield{author}{\bibinfo{person}{Ari Juels} {and} \bibinfo{person}{Farinaz Koushanfar}.} \bibinfo{year}{2024}\natexlab{}.
\newblock \showarticletitle{Props for machine-learning security}.
\newblock \bibinfo{journal}{\emph{arXiv preprint arXiv:2410.20522}} (\bibinfo{year}{2024}).
\newblock


\bibitem[Kireev et~al\mbox{.}(2023)]%
        {kireev2022adversarial}
\bibfield{author}{\bibinfo{person}{Klim Kireev}, \bibinfo{person}{Bogdan Kulynych}, {and} \bibinfo{person}{Carmela Troncoso}.} \bibinfo{year}{2023}\natexlab{}.
\newblock \showarticletitle{Adversarial robustness for tabular data through cost and utility awareness}. In \bibinfo{booktitle}{\emph{Network and Distributed System Security (NDSS) Symposium}}.
\newblock


\bibitem[Lermen et~al\mbox{.}(2026)]%
        {lermen2026large}
\bibfield{author}{\bibinfo{person}{Simon Lermen}, \bibinfo{person}{Daniel Paleka}, \bibinfo{person}{Joshua Swanson}, \bibinfo{person}{Michael Aerni}, \bibinfo{person}{Nicholas Carlini}, {and} \bibinfo{person}{Florian Tram{\`e}r}.} \bibinfo{year}{2026}\natexlab{}.
\newblock \showarticletitle{Large-scale online deanonymization with LLMs}.
\newblock \bibinfo{journal}{\emph{arXiv preprint arXiv:2602.16800}} (\bibinfo{year}{2026}).
\newblock


\bibitem[Maram et~al\mbox{.}(2021)]%
        {maram2021candid}
\bibfield{author}{\bibinfo{person}{Deepak Maram}, \bibinfo{person}{Harjasleen Malvai}, \bibinfo{person}{Fan Zhang}, \bibinfo{person}{Nerla Jean-Louis}, \bibinfo{person}{Alexander Frolov}, \bibinfo{person}{Tyler Kell}, \bibinfo{person}{Tyrone Lobban}, \bibinfo{person}{Christine Moy}, \bibinfo{person}{Ari Juels}, {and} \bibinfo{person}{Andrew Miller}.} \bibinfo{year}{2021}\natexlab{}.
\newblock \showarticletitle{Candid: Can-do decentralized identity with legacy compatibility, sybil-resistance, and accountability}. In \bibinfo{booktitle}{\emph{2021 IEEE Symposium on Security and Privacy (SP)}}. IEEE, \bibinfo{pages}{1348--1366}.
\newblock


\bibitem[McKeen et~al\mbox{.}(2016)]%
        {mckeen2016intel}
\bibfield{author}{\bibinfo{person}{Frank McKeen}, \bibinfo{person}{Ilya Alexandrovich}, \bibinfo{person}{Ittai Anati}, \bibinfo{person}{Dror Caspi}, \bibinfo{person}{Simon Johnson}, \bibinfo{person}{Rebekah Leslie-Hurd}, {and} \bibinfo{person}{Carlos Rozas}.} \bibinfo{year}{2016}\natexlab{}.
\newblock \showarticletitle{Intel{\textregistered} software guard extensions ({Intel}{\textregistered} {SGX}) support for dynamic memory management inside an enclave}. In \bibinfo{booktitle}{\emph{HASP}}. \bibinfo{pages}{1--9}.
\newblock


\bibitem[McKeen et~al\mbox{.}(2013)]%
        {mckeen2013innovative}
\bibfield{author}{\bibinfo{person}{Frank McKeen}, \bibinfo{person}{Ilya Alexandrovich}, \bibinfo{person}{Alex Berenzon}, \bibinfo{person}{Carlos~V Rozas}, \bibinfo{person}{Hisham Shafi}, \bibinfo{person}{Vedvyas Shanbhogue}, {and} \bibinfo{person}{Uday~R Savagaonkar}.} \bibinfo{year}{2013}\natexlab{}.
\newblock \showarticletitle{Innovative instructions and software model for isolated execution.}. In \bibinfo{booktitle}{\emph{HASP}}. \bibinfo{pages}{10}.
\newblock


\bibitem[Meier et~al\mbox{.}(2025)]%
        {meier2025trojanstego}
\bibfield{author}{\bibinfo{person}{Dominik Meier}, \bibinfo{person}{Jan~Philip Wahle}, \bibinfo{person}{Paul R{\"o}ttger}, \bibinfo{person}{Terry Ruas}, {and} \bibinfo{person}{Bela Gipp}.} \bibinfo{year}{2025}\natexlab{}.
\newblock \showarticletitle{TrojanStego: Your Language Model Can Secretly Be A Steganographic Privacy Leaking Agent}. In \bibinfo{booktitle}{\emph{Proceedings of the 2025 Conference on Empirical Methods in Natural Language Processing}}. \bibinfo{pages}{27232--27249}.
\newblock


\bibitem[Mohan et~al\mbox{.}(2024)]%
        {mohan2024securing}
\bibfield{author}{\bibinfo{person}{Apoorve Mohan}, \bibinfo{person}{Mengmei Ye}, \bibinfo{person}{Hubertus Franke}, \bibinfo{person}{Mudhakar Srivatsa}, \bibinfo{person}{Zhuoran Liu}, {and} \bibinfo{person}{Nelson~Mimura Gonzalez}.} \bibinfo{year}{2024}\natexlab{}.
\newblock \showarticletitle{Securing ai inference in the cloud: Is cpu-gpu confidential computing ready?}. In \bibinfo{booktitle}{\emph{2024 IEEE 17th International Conference on Cloud Computing (CLOUD)}}. IEEE, \bibinfo{pages}{164--175}.
\newblock


\bibitem[Nasr et~al\mbox{.}(2025)]%
        {nasr2025attacker}
\bibfield{author}{\bibinfo{person}{Milad Nasr}, \bibinfo{person}{Nicholas Carlini}, \bibinfo{person}{Chawin Sitawarin}, \bibinfo{person}{Sander~V Schulhoff}, \bibinfo{person}{Jamie Hayes}, \bibinfo{person}{Michael Ilie}, \bibinfo{person}{Juliette Pluto}, \bibinfo{person}{Shuang Song}, \bibinfo{person}{Harsh Chaudhari}, \bibinfo{person}{Ilia Shumailov}, {et~al\mbox{.}}} \bibinfo{year}{2025}\natexlab{}.
\newblock \showarticletitle{The Attacker Moves Second: Stronger Adaptive Attacks Bypass Defenses Against Llm Jailbreaks and Prompt Injections}.
\newblock \bibinfo{journal}{\emph{arXiv preprint arXiv:2510.09023}} (\bibinfo{year}{2025}).
\newblock


\bibitem[Nazarov and Juels(2021)]%
        {chainlink2021whitepaper}
\bibfield{author}{\bibinfo{person}{Sergey Nazarov} {and} \bibinfo{person}{Ari et~al. Juels}.} \bibinfo{year}{2021}\natexlab{}.
\newblock \bibinfo{title}{Chainlink 2.0: Next steps in the evolution of decentralized oracle networks}.
\newblock
\urldef\tempurl%
\url{https://research.chain.link/whitepaper-v2.pdf}
\showURL{%
\tempurl}
\newblock
\shownote{Whitepaper}.


\bibitem[{NVIDIA Corporation}(2023)]%
        {nvidia_h100_cc}
\bibfield{author}{\bibinfo{person}{{NVIDIA Corporation}}.} \bibinfo{year}{2023}\natexlab{}.
\newblock \bibinfo{title}{{NVIDIA H100} Tensor Core {GPU} Architecture: Confidential Computing}.
\newblock \bibinfo{howpublished}{\url{https://images.nvidia.com/aem-dam/en-zz/Solutions/data-center/HCC-Whitepaper-v1.0.pdf}}.
\newblock
\newblock
\shownote{Whitepaper WP-11459-001. Accessed: 2026-05-19}.


\bibitem[{NVIDIA Corporation}(2024)]%
        {nvidia_confidential_computing}
\bibfield{author}{\bibinfo{person}{{NVIDIA Corporation}}.} \bibinfo{year}{2024}\natexlab{}.
\newblock \bibinfo{title}{Confidential Computing Solutions}.
\newblock \bibinfo{howpublished}{\url{https://www.nvidia.com/en-us/data-center/solutions/confidential-computing/}}.
\newblock
\newblock
\shownote{Accessed: 2025-05-05}.


\bibitem[{Opacity Network}(2026)]%
        {opacity2026}
\bibfield{author}{\bibinfo{person}{{Opacity Network}}.} \bibinfo{year}{2026}\natexlab{}.
\newblock \bibinfo{title}{Opacity Network -- Verified Data Network}.
\newblock \bibinfo{howpublished}{\url{https://docs.opacity.network}}.
\newblock
\newblock
\shownote{zkTLS-based AVS on EigenLayer. Uses MPC-TLS and ZKPs for privacy-preserving data verification from Web2 to Web3. Accessed: 2026-03-23}.


\bibitem[{Opaque Systems}(2026)]%
        {opaque2026}
\bibfield{author}{\bibinfo{person}{{Opaque Systems}}.} \bibinfo{year}{2026}\natexlab{}.
\newblock \bibinfo{title}{Opaque -- Confidential {AI} Platform for Trusted {AI}}.
\newblock \bibinfo{howpublished}{\url{https://www.opaque.co}}.
\newblock
\newblock
\shownote{Multi-party confidential analytics and AI on encrypted data within TEEs. Co-founded by Prof.\ Raluca Ada Popa (UC Berkeley). Accessed: 2026-03-23}.


\bibitem[Papernot et~al\mbox{.}(2018)]%
        {papernot2018sok}
\bibfield{author}{\bibinfo{person}{Nicolas Papernot}, \bibinfo{person}{Patrick McDaniel}, \bibinfo{person}{Ananthram Sinha}, {and} \bibinfo{person}{Michael~P Wellman}.} \bibinfo{year}{2018}\natexlab{}.
\newblock \showarticletitle{{SoK}: Security and privacy in machine learning}. In \bibinfo{booktitle}{\emph{2018 IEEE European Symposium on Security and Privacy (EuroS\&P)}}. IEEE, \bibinfo{pages}{399--414}.
\newblock


\bibitem[Pass et~al\mbox{.}(2016)]%
        {cryptoeprint:2016/1027}
\bibfield{author}{\bibinfo{person}{Rafael Pass}, \bibinfo{person}{Elaine Shi}, {and} \bibinfo{person}{Florian Tramer}.} \bibinfo{year}{2016}\natexlab{}.
\newblock \bibinfo{title}{Formal Abstractions for Attested Execution Secure Processors}.
\newblock \bibinfo{howpublished}{Cryptology {ePrint} Archive, Paper 2016/1027}.
\newblock
\urldef\tempurl%
\url{https://eprint.iacr.org/2016/1027}
\showURL{%
\tempurl}


\bibitem[{Phala Network}(2026)]%
        {phala-privateinference2026}
\bibfield{author}{\bibinfo{person}{{Phala Network}}.} \bibinfo{year}{2026}\natexlab{}.
\newblock \bibinfo{title}{Private {AI} Inference -- Confidential {LLM} Serving}.
\newblock \bibinfo{howpublished}{\url{https://phala.com/solutions/private-ai-inference}}.
\newblock
\newblock
\shownote{GPU TEEs with Intel TDX and AMD SEV for hardware-level memory encryption during inference. Accessed: 2026-03-23}.


\bibitem[{Reclaim Protocol}(2026)]%
        {reclaim2026}
\bibfield{author}{\bibinfo{person}{{Reclaim Protocol}}.} \bibinfo{year}{2026}\natexlab{}.
\newblock \bibinfo{title}{Reclaim Protocol -- Cryptographic Verification for Identity, Education, Employment \& Travel}.
\newblock \bibinfo{howpublished}{\url{https://www.reclaimprotocol.org}}.
\newblock
\newblock
\shownote{zkTLS using the proxy model (``Proxying is Enough''). Over 2500 data sources, 3M+ verifications. Accessed: 2026-03-23}.


\bibitem[Rosenberg et~al\mbox{.}(2023)]%
        {rosenberg2023zk}
\bibfield{author}{\bibinfo{person}{Michael Rosenberg}, \bibinfo{person}{Jacob White}, \bibinfo{person}{Christina Garman}, {and} \bibinfo{person}{Ian Miers}.} \bibinfo{year}{2023}\natexlab{}.
\newblock \showarticletitle{zk-creds: Flexible anonymous credentials from zksnarks and existing identity infrastructure}. In \bibinfo{booktitle}{\emph{2023 IEEE Symposium on Security and Privacy (SP)}}. IEEE, \bibinfo{pages}{790--808}.
\newblock


\bibitem[Schanzenbach et~al\mbox{.}(2019)]%
        {schanzenbach2019zklaims}
\bibfield{author}{\bibinfo{person}{Martin Schanzenbach}, \bibinfo{person}{Thomas Kilian}, \bibinfo{person}{Julian Sch{\"u}tte}, {and} \bibinfo{person}{Christian Banse}.} \bibinfo{year}{2019}\natexlab{}.
\newblock \showarticletitle{ZKlaims: Privacy-preserving attribute-based credentials using non-interactive zero-knowledge techniques}.
\newblock \bibinfo{journal}{\emph{arXiv preprint arXiv:1907.09579}} (\bibinfo{year}{2019}).
\newblock


\bibitem[Schlüter and Shinde(2025)]%
        {RMPocalypse2025}
\bibfield{author}{\bibinfo{person}{Benedict Schlüter} {and} \bibinfo{person}{Shweta Shinde}.} \bibinfo{year}{2025}\natexlab{}.
\newblock \showarticletitle{RMPocalypse: How a Catch-22 Breaks AMD SEV-SNP}. In \bibinfo{booktitle}{\emph{Proceedings of the 2025 on ACM SIGSAC Conference on Computer and Communications Security}} \emph{(\bibinfo{series}{CCS '25})}. \bibinfo{publisher}{Association for Computing Machinery}.
\newblock


\bibitem[Sheatsley et~al\mbox{.}(2021)]%
        {sheatsley2021robustness}
\bibfield{author}{\bibinfo{person}{Ryan Sheatsley}, \bibinfo{person}{Ben Hoak}, \bibinfo{person}{Ethan Pauley}, \bibinfo{person}{Yannick Beugin}, \bibinfo{person}{Michael~J Weisman}, {and} \bibinfo{person}{Patrick McDaniel}.} \bibinfo{year}{2021}\natexlab{}.
\newblock \showarticletitle{On the robustness of domain constraints}. In \bibinfo{booktitle}{\emph{ACM CCS}}.
\newblock


\bibitem[Simonetto et~al\mbox{.}(2024)]%
        {simonetto2024constrained}
\bibfield{author}{\bibinfo{person}{Thibault Simonetto}, \bibinfo{person}{Salah Ghamizi}, {and} \bibinfo{person}{Maxime Cordy}.} \bibinfo{year}{2024}\natexlab{}.
\newblock \showarticletitle{Constrained adaptive attack: Effective adversarial attack against deep neural networks for tabular data}.
\newblock \bibinfo{journal}{\emph{Advances in Neural Information Processing Systems}}  \bibinfo{volume}{37} (\bibinfo{year}{2024}), \bibinfo{pages}{27817--27849}.
\newblock


\bibitem[Sporny et~al\mbox{.}(2025)]%
        {w3c2025vc}
\bibfield{author}{\bibinfo{person}{Manu Sporny}, \bibinfo{person}{Ted~Thibodeau Jr.}, \bibinfo{person}{Ivan Herman}, \bibinfo{person}{Gabe Cohen}, \bibinfo{person}{Michael~B. Jones}, {et~al\mbox{.}}} \bibinfo{year}{2025}\natexlab{}.
\newblock \bibinfo{booktitle}{\emph{Verifiable Credentials Data Model v2.0}}.
\newblock \bibinfo{type}{W3C Recommendation} REC-vc-data-model-2.0. \bibinfo{institution}{World Wide Web Consortium (W3C)}.
\newblock
\urldef\tempurl%
\url{https://www.w3.org/TR/vc-data-model/}
\showURL{%
\tempurl}


\bibitem[Swidowski et~al\mbox{.}(2026)]%
        {swidowski2026security}
\bibfield{author}{\bibinfo{person}{Kirk Swidowski}, \bibinfo{person}{Daniel Moghimi}, \bibinfo{person}{Josh Eads}, \bibinfo{person}{Erdem Aktas}, {and} \bibinfo{person}{Jia Ma}.} \bibinfo{year}{2026}\natexlab{}.
\newblock \showarticletitle{Security Assessment of Intel TDX with support for Live Migration}.
\newblock \bibinfo{journal}{\emph{arXiv preprint arXiv:2602.11434}} (\bibinfo{year}{2026}).
\newblock


\bibitem[Szegedy et~al\mbox{.}(2014)]%
        {szegedy2013intriguing}
\bibfield{author}{\bibinfo{person}{Christian Szegedy}, \bibinfo{person}{Wojciech Zaremba}, \bibinfo{person}{Ilya Sutskever}, \bibinfo{person}{Joan Bruna}, \bibinfo{person}{Dumitru Erhan}, \bibinfo{person}{Ian Goodfellow}, {and} \bibinfo{person}{Rob Fergus}.} \bibinfo{year}{2014}\natexlab{}.
\newblock \showarticletitle{Intriguing properties of neural networks}. In \bibinfo{booktitle}{\emph{International Conference on Learning Representations (ICLR)}}.
\newblock


\bibitem[{Tinfoil}(2026)]%
        {tinfoil2026}
\bibfield{author}{\bibinfo{person}{{Tinfoil}}.} \bibinfo{year}{2026}\natexlab{}.
\newblock \bibinfo{title}{Tinfoil -- Verifiably Private {AI} Powered by Secure Enclaves}.
\newblock \bibinfo{howpublished}{\url{https://tinfoil.sh}}.
\newblock
\newblock
\shownote{Accessed: 2026-03-23}.


\bibitem[{Venice AI}(2026)]%
        {venice2026}
\bibfield{author}{\bibinfo{person}{{Venice AI}}.} \bibinfo{year}{2026}\natexlab{}.
\newblock \bibinfo{title}{Venice -- Private {AI} for Unlimited Creative Freedom}.
\newblock \bibinfo{howpublished}{\url{https://venice.ai}}.
\newblock
\newblock
\shownote{Accessed: 2026-03-23}.


\bibitem[Westphal et~al\mbox{.}(2026)]%
        {westphal2026hide}
\bibfield{author}{\bibinfo{person}{Charles Westphal}, \bibinfo{person}{Keivan Navaie}, {and} \bibinfo{person}{Fernando~E Rosas}.} \bibinfo{year}{2026}\natexlab{}.
\newblock \showarticletitle{Hide and Seek in Embedding Space: Geometry-based Steganography and Detection in Large Language Models}.
\newblock \bibinfo{journal}{\emph{arXiv preprint arXiv:2601.22818}} (\bibinfo{year}{2026}).
\newblock


\bibitem[Wilke et~al\mbox{.}(2024)]%
        {wilke2024tdxdown}
\bibfield{author}{\bibinfo{person}{Luca Wilke}, \bibinfo{person}{Florian Sieck}, {and} \bibinfo{person}{Thomas Eisenbarth}.} \bibinfo{year}{2024}\natexlab{}.
\newblock \showarticletitle{TDXdown: Single-stepping and instruction counting attacks against Intel TDX}. In \bibinfo{booktitle}{\emph{Proceedings of the 2024 on ACM SIGSAC Conference on Computer and Communications Security}}. \bibinfo{pages}{79--93}.
\newblock


\bibitem[Xie et~al\mbox{.}(2024)]%
        {Xie:2024}
\bibfield{author}{\bibinfo{person}{Xiang Xie}, \bibinfo{person}{Kang Yang}, \bibinfo{person}{Xiao Wang}, {and} \bibinfo{person}{Yu Yu}.} \bibinfo{year}{2024}\natexlab{}.
\newblock \showarticletitle{Lightweight authentication of web data via garble-then-prove}. In \bibinfo{booktitle}{\emph{Proceedings of the 33rd USENIX Conference on Security Symposium}} (Philadelphia, PA, USA) \emph{(\bibinfo{series}{SEC '24})}. Article \bibinfo{articleno}{110}, \bibinfo{numpages}{18}~pages.
\newblock


\bibitem[Yuan et~al\mbox{.}(2025)]%
        {yuan2025ciphersteal}
\bibfield{author}{\bibinfo{person}{Yuanyuan Yuan}, \bibinfo{person}{Zhibo Liu}, \bibinfo{person}{Sen Deng}, \bibinfo{person}{Yanzuo Chen}, \bibinfo{person}{Shuai Wang}, \bibinfo{person}{Yinqian Zhang}, {and} \bibinfo{person}{Zhendong Su}.} \bibinfo{year}{2025}\natexlab{}.
\newblock \showarticletitle{Ciphersteal: Stealing input data from tee-shielded neural networks with ciphertext side channels}. In \bibinfo{booktitle}{\emph{2025 IEEE Symposium on Security and Privacy (SP)}}. IEEE, \bibinfo{pages}{4136--4154}.
\newblock


\bibitem[Zhang et~al\mbox{.}(2016)]%
        {zhang2016towncrier}
\bibfield{author}{\bibinfo{person}{Fan Zhang}, \bibinfo{person}{Ethan Cecchetti}, \bibinfo{person}{Kyle Croman}, \bibinfo{person}{Ari Juels}, {and} \bibinfo{person}{Elaine Shi}.} \bibinfo{year}{2016}\natexlab{}.
\newblock \showarticletitle{Town Crier: An authenticated data feed for smart contracts}. In \bibinfo{booktitle}{\emph{ACM CCS}}.
\newblock


\bibitem[Zhang et~al\mbox{.}(2020)]%
        {zhang2020deco}
\bibfield{author}{\bibinfo{person}{Fan Zhang}, \bibinfo{person}{Ethan Cecchetti}, \bibinfo{person}{Ari Juels}, {and} \bibinfo{person}{Elaine Shi}.} \bibinfo{year}{2020}\natexlab{}.
\newblock \showarticletitle{DECO: Liberating web data using decentralized oracles for TLS}. In \bibinfo{booktitle}{\emph{ACM CCS}}.
\newblock


\bibitem[Zheng et~al\mbox{.}(2023)]%
        {zheng2023judging}
\bibfield{author}{\bibinfo{person}{Lianmin Zheng}, \bibinfo{person}{Wei-Lin Chiang}, \bibinfo{person}{Ying Sheng}, \bibinfo{person}{Siyuan Zhuang}, \bibinfo{person}{Zhanghao Wu}, \bibinfo{person}{Yonghao Zhuang}, \bibinfo{person}{Zi Lin}, \bibinfo{person}{Zhuohan Li}, \bibinfo{person}{Dacheng Li}, \bibinfo{person}{Eric Xing}, {et~al\mbox{.}}} \bibinfo{year}{2023}\natexlab{}.
\newblock \showarticletitle{Judging llm-as-a-judge with mt-bench and chatbot arena}.
\newblock \bibinfo{journal}{\emph{Advances in neural information processing systems}}  \bibinfo{volume}{36} (\bibinfo{year}{2023}), \bibinfo{pages}{46595--46623}.
\newblock


\bibitem[Zhu et~al\mbox{.}(2019)]%
        {zhu2019transferable}
\bibfield{author}{\bibinfo{person}{Chen Zhu}, \bibinfo{person}{W~Ronny Huang}, \bibinfo{person}{Hengduo Li}, \bibinfo{person}{Gavin Taylor}, \bibinfo{person}{Christoph Studer}, {and} \bibinfo{person}{Tom Goldstein}.} \bibinfo{year}{2019}\natexlab{}.
\newblock \showarticletitle{Transferable clean-label poisoning attacks on deep neural nets}. In \bibinfo{booktitle}{\emph{International conference on machine learning}}. PMLR, \bibinfo{pages}{7614--7623}.
\newblock


\bibitem[Zhu et~al\mbox{.}(2024)]%
        {zhu2024confidential}
\bibfield{author}{\bibinfo{person}{Jianwei Zhu}, \bibinfo{person}{Hang Yin}, \bibinfo{person}{Peng Deng}, \bibinfo{person}{Aline Almeida}, {and} \bibinfo{person}{Shunfan Zhou}.} \bibinfo{year}{2024}\natexlab{}.
\newblock \showarticletitle{Confidential computing on NVIDIA Hopper GPUs: a performance benchmark study}.
\newblock \bibinfo{journal}{\emph{arXiv preprint arXiv:2409.03992}} (\bibinfo{year}{2024}).
\newblock


\bibitem[{zkPass}(2026)]%
        {zkpass2026}
\bibfield{author}{\bibinfo{person}{{zkPass}}.} \bibinfo{year}{2026}\natexlab{}.
\newblock \bibinfo{title}{zkPass -- Private Data Protocol}.
\newblock \bibinfo{howpublished}{\url{https://zkpass.org}}.
\newblock
\newblock
\shownote{Decentralized oracle protocol using zkTLS with 3P-TLS and hybrid ZK (VOLE-in-the-Head). Accessed: 2026-03-23}.


\bibitem[Zou et~al\mbox{.}(2023)]%
        {zou2023universal}
\bibfield{author}{\bibinfo{person}{Andy Zou}, \bibinfo{person}{Zifan Wang}, \bibinfo{person}{Nicholas Carlini}, \bibinfo{person}{Milad Nasr}, \bibinfo{person}{J~Zico Kolter}, {and} \bibinfo{person}{Matt Fredrikson}.} \bibinfo{year}{2023}\natexlab{}.
\newblock \showarticletitle{Universal and transferable adversarial attacks on aligned language models}.
\newblock \bibinfo{journal}{\emph{arXiv preprint arXiv:2307.15043}} (\bibinfo{year}{2023}).
\newblock


\end{thebibliography}

\section{Artifact}\label{sec:artifact}

The artifact is available at \url{https://anonymous.4open.science/r/picreds}. It comprises four components, each reproducing a main result of the paper:

\begin{itemize}
    \item \textbf{$\pi$Cred enclave and reference client.} The Flask service that runs inside an Intel TDX Confidential Space VM with an NVIDIA H100 in confidential computing mode, together with the data-source plugins (Plaid, Gmail, Whoop, HTTP, raw code, OCI image) and prompt registry instantiating every credential evaluated in \Cref{sec:evaluation}.
    \item \textbf{Two-TEE software-audit demonstration.} The audit enclave that issues a code-attestation $\pi$Cred, a deliberately vulnerable test service deployed in a second Confidential Space VM, and the cross-TEE verifier that matches the credential's bound \texttt{image\_digest} against the running enclave's attestation (\Cref{sec:applications}).    
    \item \textbf{SCAE evaluation.} The constrained transaction search implementation, baseline accuracy experiments reproducing Figures \ref{fig:baseline_accuracy}, \ref{fig:baseline_deviation}, and Table~\ref{tab:baseline-performance}, and the attack-success experiments reproducing Figure~\ref{fig:scae-results} (\Cref{sec:scae-hardness}).
    \item \textbf{ACPP evaluation.} the poisoned-dataset construction, fine-tuning pipeline, and predicate-recovery measurements reproducing Figure~\ref{fig:length_test_plot} (\Cref{sec:acpp-hardness}).
\end{itemize}

\section{Formal Analysis of \microcreds Protocol}\label{app:security_proof}

\begin{figure}[!t]
\protbox{TEE functionality $\mathcal{F}_{\textsf{att}}[\Sigma, \textsf{Reg}]$}
{   
    $\circ\:$ On initialization: \\
    \hspace*{2em} $(\mpk, \msk) \getsr \Sigma.\textsf{Gen}(1^\lambda), I \gets \emptyset$ \\

    $\circ\:$ On receive $\texttt{getpk}^*()$ from some platform $\calP$ \\
    \hspace*{2em} Send $\mpk$ to $\calP$ \\

    $\circ\:$ On receive $\texttt{install}^*(\textit{idx}, \textsf{prog})$ from some $\calP \in \textsf{Reg}$: \\
    \hspace*{2em} If $\calP$ is honest, assert $\textit{idx} = \textit{sid}$ \\
    \hspace*{2em} Generate nonce $\textit{eid} \in \{0,1\}^\lambda$.\\
    \hspace*{2em} Store $I[\textit{eid}, \calP] = (\textit{idx}, \textsf{prog}, 0)$ and send $\textit{eid}$ to $\calP$ \\

    $\circ\:$ On receive $\texttt{resume}^*(\textit{eid}, \textsf{inp})$ from some $\calP \in \textsf{Reg}$\\
    \hspace*{2em} Let $(\textit{idx}, \textsf{prog}, \textsf{mem}) = I[\textit{eid}, \calP]$, abort if not found \\
    \hspace*{2em} Compute $(\textsf{out}, \textsf{mem}') = \textsf{prog}(\textsf{inp}, \textsf{mem})$\\
    \hspace*{2em} Update $I[\textit{eid}, \calP] = (\textit{idx}, \textsf{prog}, \textsf{mem}')$ \\
    \hspace*{2em} Let $\sigma =  \Sigma.\textsf{Sign}_\msk(\textit{idx}, \textit{eid}, \textsf{prog}, \textsf{out})$ \\
    \hspace*{2em} Send $(\textsf{out}, \sigma)$ to $\calP$
}
\caption{Formal abstraction for TEE attested execution. The functionality is parameterized by a signature scheme $\Sigma$ and a registry $\textsf{Reg}$ of platforms equipped with TEEs.}
\label{fig:g_sgx}
\end{figure}

\newcommand{\tcomment}[1]{\hfill$\triangleright$~\textit{#1}}

\begin{figure}[!t]
\protbox{Program $\propsprog$}
{
    \textit{State: } $\sk_T^{\textsf{sig}} = \bot$, $\sk_T^{\textsf{enc}} = \bot$, $\tau = \bot$\\

    $\circ\:$ On input $(\textsf{``Setup''}, (\textsf{LLM}, \textsf{prompt}, \{f_{\textsf{prep},i}, \textsf{wl}_i, \rho_i\}_{i \in [n]}))$ \\
    \hspace*{2em} \textbf{assert} $\tau = \bot$ \\
    \hspace*{2em} $\tau \leftarrow (\textsf{LLM}, \textsf{prompt}, \{f_{\textsf{prep},i}, \textsf{wl}_i, \rho_i\}_{i \in [n]})$\\
    \hspace*{2em} $(\pk_T^{\textsf{sig}}, \sk_T^{\textsf{sig}}) \leftarrow \Sigma.\textsf{Gen}(1^\lambda)$\\
    \hspace*{2em} $(\pk_T^{\textsf{enc}}, \sk_T^{\textsf{enc}}) \leftarrow E.\textsf{Gen}(1^\lambda)$\\
    \hspace*{2em} \textbf{return} $(\tau, \pk_T^{\textsf{sig}}, \pk_T^{\textsf{enc}})$ \\

    $\circ\:$ On input $(\textsf{``Issue''}, \ctcreds)$ \\
    \hspace*{2em} \textbf{assert} $\tau \neq \bot$\\
    \hspace*{2em} $\textsf{in} \leftarrow E.\textsf{Dec}(\sk_T^{\textsf{enc}}, \ctcreds)$\\
    \hspace*{2em} \textbf{assert} $\textsf{in} \neq \bot$\\
    \hspace*{2em} $(\pk_\mathcal{P}, \{(\textsf{DS}_i, \textsf{args}_i)\}_{i \in [n]}) \leftarrow \textsf{in}$\\
    \hspace*{2em} \textbf{for} $i \in [n]$:\\
    \hspace*{3em} \textbf{assert} $\textsf{DS}_i \in \textsf{wl}_i$ \\
    \hspace*{3em} $\textsf{data}_i \leftarrow \mathcal{O}(\textsf{DS}_i, \textsf{args}_i)$\\
    \hspace*{3em} $x_i \leftarrow f_{\textsf{prep},i}(\textsf{data}_i)$\\
    \hspace*{3em} $\textsf{prov}_i \leftarrow \rho_i(\textsf{DS}_i, \textsf{args}_i)$\\
    \hspace*{2em} $y \leftarrow \textsf{LLM}(\textsf{prompt}[x_1, \ldots, x_n])$\\
    \hspace*{2em} $\pi \leftarrow (\tau, y, \pk_\mathcal{P}, \{\textsf{prov}_i\}_{i \in [n]})$\\
    \hspace*{2em} $\sigma_\pi \leftarrow \Sigma.\textsf{Sign}_{\sk_T^{\textsf{sig}}}(\pi)$\\
    \hspace*{2em} $\ctout \leftarrow E.\textsf{Enc}(\pk_\mathcal{P}, (\pi, \sigma_\pi))$\\
    \hspace*{2em} \textbf{return} $\ctout$
}
\caption{The \microcreds enclave program, parameterized by an oracle $\mathcal{O}$ for fetching data. The enclave generates two key pairs at setup: a signing pair $(\pk_T^{\textsf{sig}}, \sk_T^{\textsf{sig}})$ under signature scheme $\Sigma$ used to authenticate issued credentials, and an encryption pair $(\pk_T^{\textsf{enc}}, \sk_T^{\textsf{enc}})$ under public-key encryption scheme $E$ used to receive issuance requests. Both public keys are returned to the surrounding protocol and bound to $\propsprog$ by the $\mathcal{F}_{\textsf{att}}$ signature over the setup output.}
\label{fig:props-program}
\end{figure}

\begin{figure}[!t]
\protbox{\microcreds Protocol \textnormal{$\Pi^{\mathcal{F}_{\textsf{att}}[\Sigma, \textsf{Reg}]}(\propsprog, \textit{sid})$}}
{
    \textit{State: } $\textit{eid} = \bot$, $\tau = \bot$, $\pk_T^{\textsf{sig}} = \bot$, $\pk_T^{\textsf{enc}} = \bot$, $\sigma_{\textsf{att}} = \bot$\\

    $\circ\:$ On $(\textsf{``Setup''}, (\textsf{LLM}, \textsf{prompt}, \{f_{\textsf{prep},i}, \textsf{wl}_i, \rho_i\}_{i \in [n]}))$ Server runs: \\
    \hspace*{2em} \textbf{assert} $\textit{eid} = \bot$\\
    \hspace*{2em} $\textit{eid} \leftarrow \mathcal{F}_{\textsf{att}}.\textsf{install}^*(\textit{sid}, \propsprog)$\\
    \hspace*{2em} $\tau \leftarrow (\textsf{LLM}, \textsf{prompt}, \{f_{\textsf{prep},i}, \textsf{wl}_i, \rho_i\}_{i \in [n]})$\\
    \hspace*{2em} $((\tau, \pk_T^{\textsf{sig}}, \pk_T^{\textsf{enc}}), \sigma_{\textsf{att}}) \leftarrow \mathcal{F}_{\textsf{att}}.\textsf{resume}^*(\textit{eid}, (\textsf{``Setup''}, \tau))$\\
    \hspace*{2em} \textbf{broadcast} $(\textit{eid}, \tau, \pk_T^{\textsf{sig}}, \pk_T^{\textsf{enc}}, \sigma_{\textsf{att}})$\\

    $\circ\:$ On receiving $(\textit{eid}, \tau, \pk_T^{\textsf{sig}}, \pk_T^{\textsf{enc}}, \sigma_{\textsf{att}})$ via broadcast: store locally. \\

    $\circ\:$ On $(\textsf{``Issue''}, \{(\textsf{DS}_i, \textsf{args}_i)\}_{i\in[n]})$ Prover runs:  \\
    \hspace*{2em} $(\pk_\mathcal{P}, \sk_\mathcal{P}) \leftarrow E.\textsf{Gen}(1^\lambda)$\\
    \hspace*{2em} $\ctcreds \leftarrow E.\textsf{Enc}(\pk_T^{\textsf{enc}}, (\pk_\mathcal{P}, \{(\textsf{DS}_i, \textsf{args}_i)\}_{i\in[n]}))$\\
    \hspace*{2em} $\ctout \leftarrow$ Server.$\textsf{Issue}(\ctcreds)$\\
    \hspace*{2em} $(\pi, \sigma_\pi) \leftarrow E.\textsf{Dec}(\sk_\mathcal{P}, \ctout)$; \textbf{assert} $\neq \bot$; \textbf{return} $(\pi, \sigma_\pi)$ \\

    $\circ\:$ On $(\textsf{``Issue''}, \ctcreds)$ Server runs:\\
    \hspace*{2em} \textbf{assert} $\textit{eid} \neq \bot$\\
    \hspace*{2em} $(\ctout, \_) \leftarrow \mathcal{F}_{\textsf{att}}.\textsf{resume}^*(\textit{eid}, (\textsf{``Issue''}, \ctcreds))$\\
    \hspace*{2em} \textbf{return} $\ctout$\\

    $\circ\:$ On $(\textsf{``Verify''}, \pi, \sigma_\pi)$ Verifier runs:\\
    \hspace*{2em} \textsf{mpk} $\leftarrow \mathcal{F}_{\textsf{att}}.\textsf{getpk}^*()$\\
    \hspace*{2em} \textbf{assert} $\Sigma.\textsf{Ver}(\textsf{mpk}, (\textit{sid}, \textit{eid}, \propsprog, (\tau, \pk_T^{\textsf{sig}}, \pk_T^{\textsf{enc}})), \sigma_{\textsf{att}})$\\
    \hspace*{2em} \textbf{assert} $\Sigma.\textsf{Ver}(\pk_T^{\textsf{sig}}, \pi, \sigma_\pi)$\\
    \hspace*{2em} \textbf{assert} $\pi.\tau = \tau$\\
    \hspace*{2em} \textbf{return} $\textsf{accept}$
}
\caption{\microcreds real-world protocol using the $\mathcal{F}_{\textsf{att}}$ functionality. At setup the enclave commits to both $\pk_T^{\textsf{sig}}$ and $\pk_T^{\textsf{enc}}$ via a single $\mathcal{F}_{\textsf{att}}$ signature, after which the prover encrypts issuance requests under $\pk_T^{\textsf{enc}}$ and the verifier checks credential signatures under $\pk_T^{\textsf{sig}}$. The prover and server communicate over an authenticated channel.}
\label{fig:real-functionality}
\end{figure}

\begin{figure}[!t]
\protbox{Simulator $\mathcal{S}$}
{
\textit{Internal state:} master keys $(\mathsf{mpk}, \mathsf{msk}) \leftarrow \Sigma.\textsf{Gen}(1^\lambda)$ for emulated $\mathcal{F}_\textsf{att}$\\

\textit{Throughout:} Expose an emulated $\mathcal{F}_\textsf{att}$ interface ($\textsf{getpk}^*$, $\textsf{install}^*$, $\textsf{resume}^*$) to $\mathcal{A}$. Messages between $\mathcal{Z}$ and $\mathcal{A}$ are forwarded unchanged. Also maintain a log $L$ of all $(m, \sigma)$ pairs produced by emulated $\mathcal{F}_\textsf{att}$ (under $\textsf{msk}$) and by emulated $\propsprog$ (under $\sk_T^\textsf{sig}$). On observing any message from $\mathcal{A}$ to any party containing a pair $(m^*, \sigma^*)$ such that either $\Sigma.\textsf{Vf}_{\textsf{mpk}}(m^*, \sigma^*) = 1$ or $\Sigma.\textsf{Vf}_{\pk_T^\textsf{sig}}(m^*, \sigma^*) = 1$, but $(m^*, \sigma^*) \notin L$: abort\\

$\circ\:$ On broadcast $\tau_0$ from $\mathcal{F}_\pi$:\\
\hspace*{2em} $\textit{eid}^\star \leftarrow \textsf{install}^*(\textit{sid}, \propsprog)$ \tcomment{emulated install and resume}\\
\hspace*{2em} $((\tau, \pk_T^\textsf{sig}, \pk_T^\textsf{enc}), \sigma_\textsf{att}) \leftarrow \textsf{resume}^*(\textit{eid}^\star, (\textsf{``Setup''}, \tau_0))$\\
\hspace*{2em} Record $(\sk_T^\textsf{sig}, \sk_T^\textsf{enc})$ from emulated $\propsprog$ \\
\hspace*{2em} Send $(\textit{eid}^\star, \tau, \pk_T^\textsf{sig}, \pk_T^\textsf{enc}, \sigma_\textsf{att})$ to $\mathcal{A}$\\

$\circ\:$ On $\mathcal{A}$ calling $\textsf{resume}^*(\textit{eid}^\star, (\textsf{``Issue''}, \ctcreds^*))$ \tcomment{corrupt prover}\\
\hspace*{2em} $\textsf{in} \leftarrow E.\textsf{Dec}(\sk_T^\textsf{enc}, \ctcreds^*)$\\
\hspace*{2em} \textbf{if} $\textsf{in} = \bot$: return $(\bot, \Sigma.\textsf{Sign}_{\textsf{msk}}(\textit{sid}, \textit{eid}^\star, \propsprog, \bot))$ to $\mathcal{A}$\\
\hspace*{2em} Parse $(\pk_\mathcal{P}, \{(\textsf{DS}_i, \textsf{args}_i)\}_{i \in [n]}) \leftarrow \textsf{in}$\\
\hspace*{2em} Send $(\textsf{``Issue''}, \{(\textsf{DS}_i, \textsf{args}_i)\}, \pk_\mathcal{P})$ to $\mathcal{F}_\pi$ \\
\hspace*{2em} Receive $\pi$ from $\mathcal{F}_\pi$\\
\hspace*{2em} \textbf{if} $\pi = \bot$: return $(\bot, \Sigma.\textsf{Sign}_{\textsf{msk}}(\textit{sid}, \textit{eid}^\star, \propsprog, \bot))$ to $\mathcal{A}$\\
\hspace*{2em} $\sigma_\pi \leftarrow \Sigma.\textsf{Sign}_{\sk_T^\textsf{sig}}(\pi);\ \ctout \leftarrow E.\textsf{Enc}(\pk_\mathcal{P}, (\pi, \sigma_\pi))$\\
\hspace*{2em} $\sigma \leftarrow \Sigma.\textsf{Sign}_{\textsf{msk}}(\textit{sid}, \textit{eid}^\star, \propsprog, \ctout)$\\
\hspace*{2em} return $(\ctout, \sigma)$ to $\mathcal{A}$\\

$\circ\:$ On leakage $(\text{``request''}, \ell_\textsf{req})$ from $\mathcal{F}_\pi$ \tcomment{honest prover}\\
\hspace*{2em} $\ctcreds^\textsf{dum} \leftarrow E.\textsf{Enc}(\pk_T^\textsf{enc}, 0^{\ell_\textsf{req}})$\\
\hspace*{2em} Send $\ctcreds^\textsf{dum}$ to $\mathcal{A}$\\
\hspace*{2em} If $\mathcal{A}$ forwards unchanged to $\mathcal{F}_{\textsf{att}}$: \\
\hspace*{4em} Wait for leakage $(\text{``issue''}, \ell_\textsf{iss}, \pk_\mathcal{P})$ from $\mathcal{F}_\pi$\\
\hspace*{4em} $\ctout^\textsf{dum} \leftarrow E.\textsf{Enc}(\pk_\mathcal{P}, 0^{\ell_\textsf{iss}})$\\
\hspace*{4em} $\sigma \leftarrow \Sigma.\textsf{Sign}_{\textsf{msk}}(\textit{sid}, \textit{eid}^\star, \propsprog, \ctout^\textsf{dum})$\\
\hspace*{4em} Return $(\ctout^\textsf{dum}, \sigma)$ to $\mathcal{A}$\\
\hspace*{4em} Allow $\mathcal{F}_\pi$ to deliver $\pi$ once $\mathcal{A}$ forwards $(\ctout^\textsf{dum}, \sigma)$\\
}
\caption{Simulator for proof of Theorem \ref{thm-security}.}
\label{fig:simulator}
\end{figure}

In this appendix we formally specify the real functionality corresponding to the \microcreds protocol and prove that it UC-realizes $\mathcal{F}_{\pi}$. Corresponding to our stated threat model, throughout our analysis we view TEE attested execution using an ideal model in which the program is guaranteed to execute \emph{correctly} and with \emph{perfect confidentiality}. We use the formal abstraction of attested execution proposed by Pass et al. in~\cite{cryptoeprint:2016/1027} (\Cref{fig:g_sgx}).

\Cref{fig:real-functionality} presents the real functionality of the \microcreds protocol, it is primarily a wrapper around a program running on the TEE which runs the issuance workflow (\Cref{fig:props-program}). 

\begin{restatable}{theorem}{MainTheorem}\label{thm-security}
The protocol $\Pi^{\mathcal{F}_{\textsf{att}}[\Sigma, \textsf{Reg}]}(\propsprog, \textit{sid})$ UC-realizes $\mathcal{F}_{\pi}$ in the $\mathcal{F}_{\textsf{att}}$-hybrid model, assuming that the signature scheme $\Sigma$ is EUF-CMA secure and the encryption scheme $E$ is IND-CCA secure.
\end{restatable}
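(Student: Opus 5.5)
We prove the theorem by exhibiting the simulator $\mathcal{S}$ of \Cref{fig:simulator} and arguing, via a sequence of hybrid games, that no environment $\mathcal{Z}$ can distinguish the real execution of $\Pi^{\mathcal{F}_{\textsf{att}}[\Sigma, \textsf{Reg}]}(\propsprog, \textit{sid})$ with the dummy adversary $\mathcal{A}$ from the ideal execution of $\mathcal{F}_\pi$ with $\mathcal{S}$. The simulator emulates $\mathcal{F}_{\textsf{att}}$ internally: it holds $\msk$, runs the enclave program itself, and therefore knows $\sk_T^{\textsf{sig}}$ and $\sk_T^{\textsf{enc}}$. We split the analysis by corruption pattern. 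For an honest prover, $\mathcal{S}$ must fabricate ciphertexts from leaked lengths alone. For a corrupt prover, $\mathcal{S}$ extracts the request by decrypting with $\sk_T^{\textsf{enc}}$ and forwards it to $\mathcal{F}_\pi$. Verification by an honest verifier must agree with membership in the list $I$ in both cases.

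The hybrids proceed in the following order.
\begin{enumerate}
\item[$H_0$.] The real world, with $\mathcal{F}_{\textsf{att}}$ treated as an ideal box.
\item[$H_1$.] Abort whenever $\mathcal{A}$ outputs a pair that verifies under $\mpk$ or $\pk_T^{\textsf{sig}}$ but was never produced by the emulated $\mathcal{F}_{\textsf{att}}$ or the emulated program (the log $L$). Two straightforward reductions to EUF-CMA bound the abort probability. The reduction for $\mpk$ embeds the challenge key as $\mpk$ and answers the attestation signatures via the signing oracle. The reduction for $\pk_T^{\textsf{sig}}$ embeds the challenge key as the enclave key, which is legitimate because $\sk_T^{\textsf{sig}}$ never leaves the (ideal) enclave.
\item[$H_2$.] For corrupt-prover issuances, compute $\pi$ by calling $\mathcal{F}_\pi$ on the decrypted request rather than running the program. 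This hybrid is identical to $H_1$, because $\mathcal{F}_\pi$ performs exactly the whitelist check, oracle fetch, preprocessing, LLM call and provenance computation of $\propsprog$, and appends $\pi$ to $I$.
\item[$H_3$.] For honest-prover issuances, replace $\ctcreds$ by $E.\textsf{Enc}(\pk_T^{\textsf{enc}}, 0^{|r|})$. The enclave then continues to use the true request, which $\mathcal{S}$ obtains implicitly through $\mathcal{F}_\pi$.
\item[$H_4$.] Replace $\ctout$ by $E.\textsf{Enc}(\pk_\mathcal{P}, 0^{|\pi|})$.
\end{enumerate}
$H_4$ coincides with the ideal world.

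Soundness of verification follows from $H_1$. If an honest verifier accepts $(\pi, \sigma_\pi)$, then $\sigma_{\textsf{att}}$ certifies that $\pk_T^{\textsf{sig}}$ was output by the genuine $\propsprog$ on $\tau$. Since no forgery occurred, $\sigma_\pi$ was produced by the enclave's Issue routine, so $\pi$ was appended to $I$. Conversely, every honestly issued credential verifies, so accept/reject decisions match $\mathcal{F}_\pi$ except with negligible probability. Replay and mauling of honest outputs are handled because $\pi$ binds $\pk_\mathcal{P}$ and is encrypted to $\pk_\mathcal{P}$.

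The main obstacle is the step from $H_2$ to $H_3$, which requires IND-CCA rather than IND-CPA. The adversary may submit arbitrary ciphertexts to the enclave, which decrypts them, so the reduction must answer those decryptions. We would use a hybrid over honest requests, embed the IND-CCA challenge key as $\pk_T^{\textsf{enc}}$, and simulate the enclave's decryptions of adversarial $\ctcreds^*$ with the decryption oracle. The challenge ciphertext itself, if replayed by $\mathcal{A}$ to the enclave, is recognized syntactically and handled using the real request, matching the ``forwards unchanged'' branch of $\mathcal{S}$.

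Two subtleties need care in this step. First, a modified ciphertext that decrypts to a related request with a different $\pk_\mathcal{P}$ is a mauling attack; this is exactly what CCA security rules out. Second, the oracle outputs $\textsf{data}_i$ are treated as fixed by the environment-controlled sources, so $\mathcal{F}_\pi$ and the real enclave fetch identical data.

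The step from $H_3$ to $H_4$ is a similar reduction under the freshly generated $\pk_\mathcal{P}$. It is simpler, since the honest prover decrypts only the single ciphertext it receives, and a modified ciphertext causes an abort in both worlds.

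Summing the advantages over polynomially many issuances gives negligible distinguishing advantage, which proves the theorem.
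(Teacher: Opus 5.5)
Your proposal follows essentially the same route as the paper's proof: an EUF-CMA abort hybrid for unlogged signatures under $\mpk$ and $\pk_T^{\textsf{sig}}$, then an IND-CCA step that replaces the honest request ciphertext by an encryption of zeros, then a CPA-style step that replaces $\ctout$ under $\pk_\mathcal{P}$. In the IND-CCA step you match the paper's reduction (hybrid over honest requests, challenge key embedded as $\pk_T^{\textsf{enc}}$, decryption oracle for adversarial $\ctcreds^*$, lookup for replayed challenge ciphertexts), and your explicit corrupt-prover hybrid and verification-consistency argument spell out what the paper leaves implicit. One small correction: the claim that a modified $\ctout$ ``causes an abort in both worlds'' is not guaranteed by the encryption scheme, since a mauled ciphertext may decrypt to some other plaintext; the paper instead excludes that case by assuming an authenticated prover--server channel, and that assumption is what lets IND-CPA suffice there.
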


\begin{proof}[Proof of Theorem~\ref{thm-security}]

We show that for every PPT adversary $\mathcal{A}$ interacting with $\Pi^{\mathcal{F}_{\textsf{att}}[\Sigma, \textsf{Reg}]}(\propsprog, \textit{sid})$ in the real world, the simulator $\mathcal{S}$ of Figure~\ref{fig:simulator} produces a view in the ideal world that is computationally indistinguishable from $\mathcal{A}$'s real-world view, for any PPT environment $\mathcal{Z}$.

We proceed via a hybrid argument in the $\mathcal{F}_{\textsf{att}}$-hybrid model, where $H_0$ is the real-world, in which $\mathcal{A}$ interacts with $\Pi$.

\textbf{Hybrid $H_1$}: Identical to $H_0$ except that $\mathcal{S}$ emulates $\mathcal{F}_{\textsf{att}}$ by maintaining its own version of master keys $(\mpk, \msk)$ and responding to any queries from $\mathcal{A}$. These interfaces are functionally identical so $\mathcal{Z}'s$ view is identically distributed in $H_0$ and $H_1$.

\textbf{Hybrid $H_2$}: Identical to $H_1$ except that the execution aborts if $\mathcal{A}$ ever produces a pair $(m^*, \sigma^*)$ such that either $\Sigma.\textsf{Vf}_{\textsf{mpk}}(m^*, \sigma^*) = 1$ or $\Sigma.\textsf{Vf}_{\pk_T^\textsf{sig}}(m^*, \sigma^*) = 1$ but that the signature was not emulated by $\mathcal{S}$. By EUF-CMA security of $\Sigma$ this occurs with negligible probability: a forgery under $\mpk$ (resp.\ $\pk_T^\textsf{sig}$) with $(m^*,\sigma^*)\notin L$ yields an EUF-CMA forger against the corresponding key. Summing the two cases, $H_1 \approx H_2$.

\textbf{Hybrid $H_3$:} The full simulator execution of Figure~\ref{fig:simulator}: honest parties send \textsf{``Issue''} to $\mathcal{F}_\pi$, $\mathcal{S}$ produces $\ctcreds^\textsf{dum} \leftarrow E.\textsf{Enc}(\pk_T^\textsf{enc}, 0^{\ell_\textsf{req}})$ on request leakage, and on issue leakage produces $\ctout^\textsf{dum} \leftarrow E.\textsf{Enc}(\pk_\mathcal{P}, 0^{\ell_\textsf{iss}})$ under the prover public key $\pk_\mathcal{P}$ leaked by $\mathcal{F}_\pi$. We show $H_2 \approx H_3$ via an intermediate game $G$ defined in the reduction.

\textit{Reduction.} Let $G$ be the hybrid in which $\ctcreds$ is $E.\textsf{Enc}(\pk_T^\textsf{enc}, 0^{\ell_\textsf{req}})$ (as in $H_3$) but $\ctout$ is still $E.\textsf{Enc}(\pk_\mathcal{P}, (\pi, \sigma_\pi))$ under the real prover key (as in $H_2$).

\emph{Step 1: $H_2 \approx G$.} The request ciphertext changes from $E.\textsf{Enc}(\pk_T^\textsf{enc},\allowbreak \textsf{in})$ to $E.\textsf{Enc}(\pk_T^\textsf{enc}, 0^{\ell_\textsf{req}})$. Distinguishing these reduces to IND-CCA on the encryption scheme. Reduction $B_1$ receives the challenge key and embeds it as $\pk_T^\textsf{enc}$, generating all other keys itself, and internally runs $H_2$, playing all parties. For each honest issue request $\textsf{in}_i$, $B_1$ queries the challenge oracle on $(\textsf{in}_i, 0^{\ell_\textsf{req}})$ and sends the result to $\mathcal{A}$. The response is computed honestly as in $H_2$ by running the real issuance logic on $\textsf{in}_i$ and encrypting the result to $\pk_\mathcal{P}$. For each adversarial $\textsf{resume}^*$ on $\ctcreds^*$ if the ciphertext is a previous challenge ciphertext $B_1$ looks up the corresponding $\textsf{in}_i$, otherwise $B_1$ queries the decryption oracle to recover $\textsf{in}^*$ and then runs the real issuance logic to generate the response. $B_1$ outputs $\mathcal{Z}$'s guess. If the challenge bit was $0$, $B_1$ reproduced $H_2$ exactly, if it was $1$, $B_1$ reproduced $G$ exactly. Via a standard hybrid over the $q$ honest requests, the advantage is at most $q$ times the IND-CCA advantage, so $G \approx H_2$.

\emph{Step 2: $G \approx H_3$.} The response ciphertext changes from $E.\textsf{Enc}(\pk_\mathcal{P}, \allowbreak (\pi, \sigma_\pi))$ to $E.\textsf{Enc}(\pk_\mathcal{P}, 0^{\ell_\textsf{iss}})$ under the same key $\pk_\mathcal{P}$ (leaked by $\mathcal{F}_\pi$). Distinguishing reduces to IND-CPA. Reduction $B_2$ embeds the challenge key as $\pk_\mathcal{P}$, runs $G$ playing all parties, and for each honest issue request computes $(\pi_i, \sigma_{\pi,i})$ by running the real issuance logic on the known input, then queries the challenge oracle on $((\pi_i, \sigma_{\pi,i}), 0^{\ell_\textsf{iss}})$ and forwards the result to $\mathcal{A}$. Adversarial $\textsf{resume}^*$ queries are answered by decrypting with $\sk_T^\textsf{enc}$ and running the real issuance logic. If the bit was $0$, $B_2$ reproduces $G$, if it was $1$, $H_3$. Via a standard hybrid over the $q$ honest requests, the advantage is at most $q$ times the IND-CPA advantage, so $G \approx H_3$.

$H_3$ is the full simulator execution, completing the proof.
\end{proof}

\section{Evaluation Details}\label{appx:detailed_micro_creds}

This appendix collects details deferred from \Cref{sec:evaluation}: the specific prompts used for each \microcred, the initial accuracy results that informed our model selection, the full baseline accuracy table, and details of our attacks on SCAE and ACPP.

\begin{figure}[!t]
\centering
\includegraphics[width=0.48\textwidth]{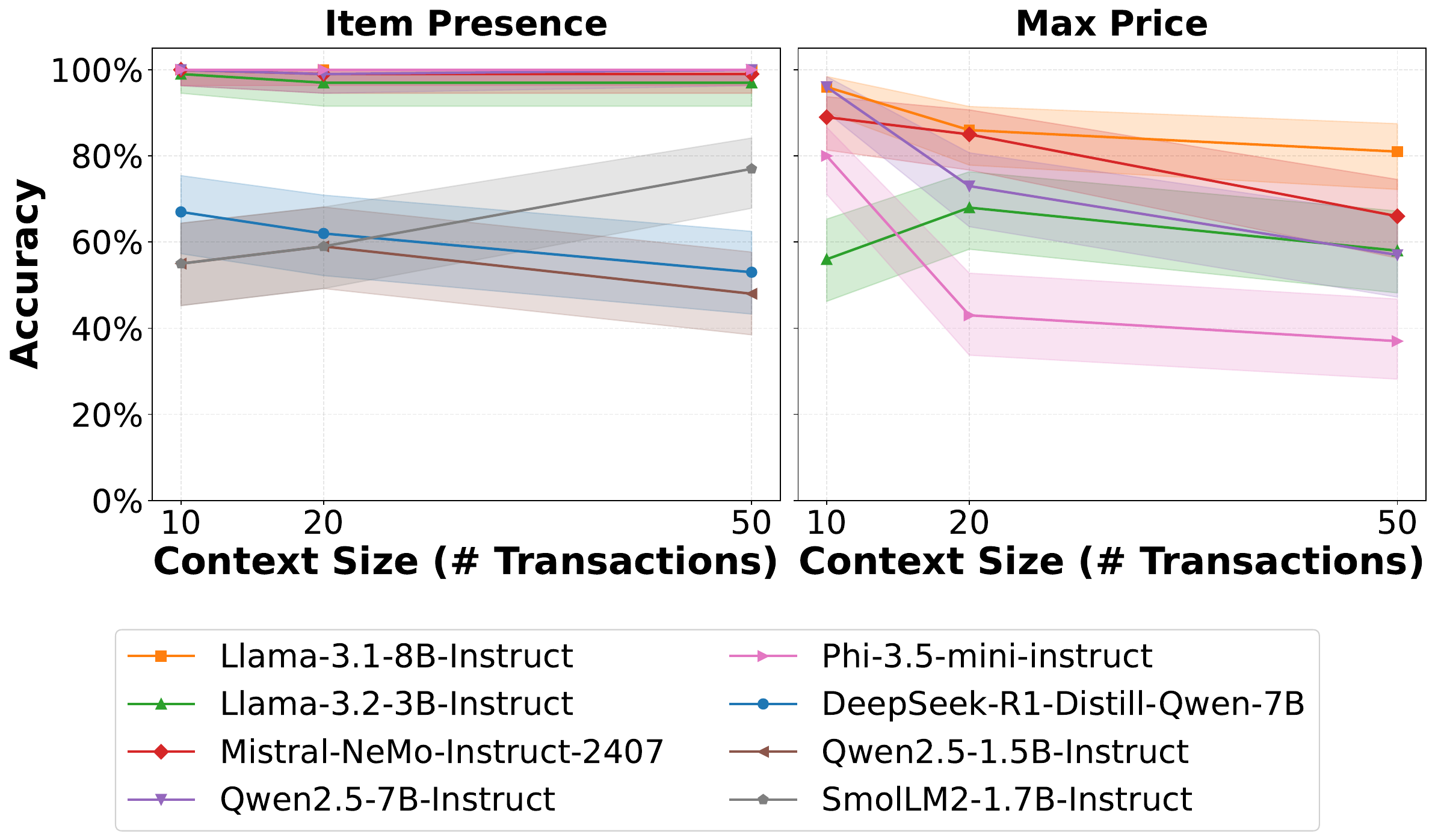}
\caption{Model accuracy across deterministic tasks with varying context sizes (10, 20, 50 transactions). Shaded regions represent 95\% binomial confidence intervals.}
\label{fig:baseline_accuracy}
\end{figure}

\begin{figure}[!t]
\centering
\includegraphics[width=0.48\textwidth]{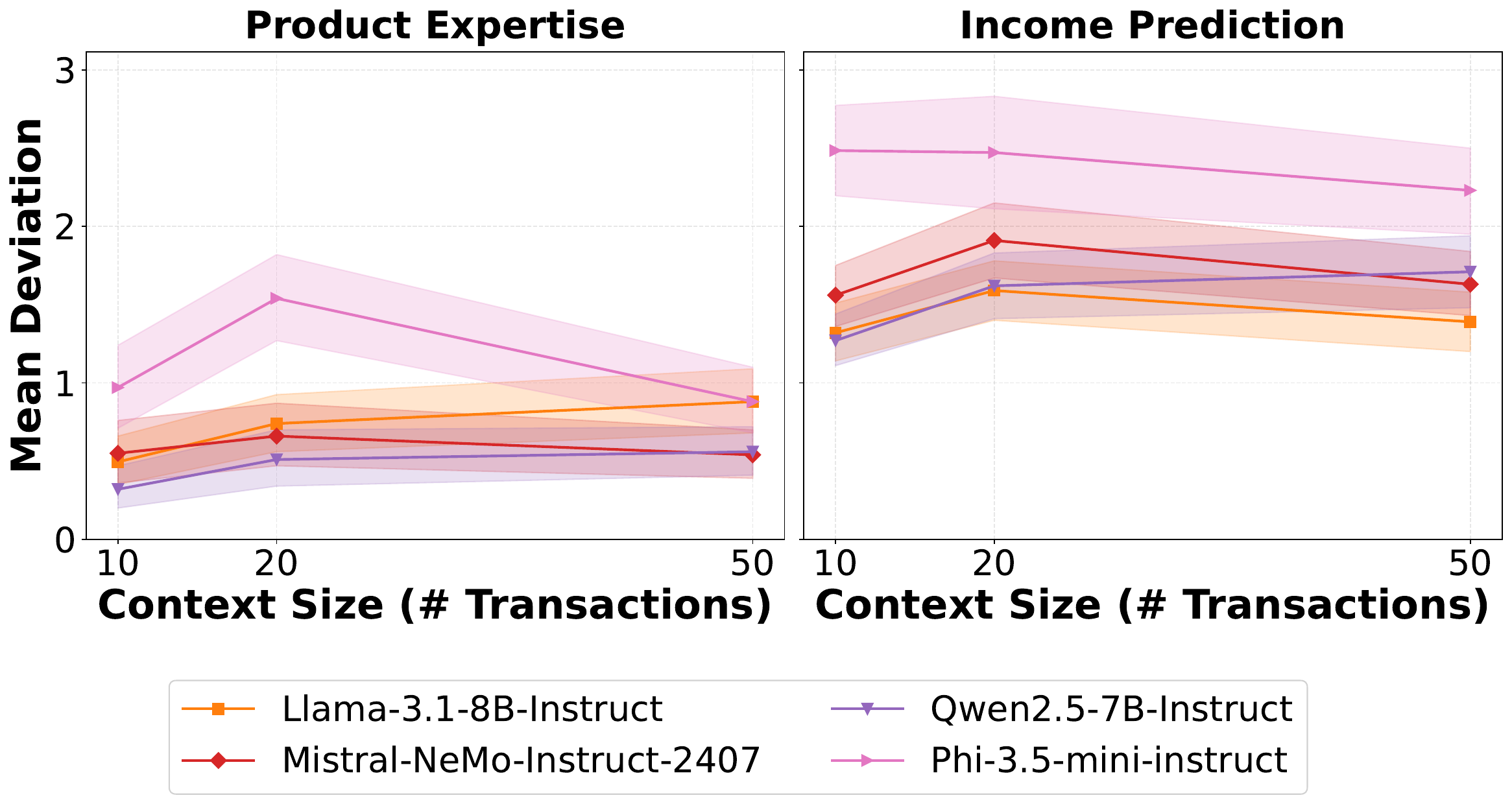}
\caption{Mean absolute deviation from ground truth across semantic inference tasks with varying context sizes (10, 20, 50 transactions). Product expertise is measured on a 1-10 scale, whereas income prediction outputs a bracket in 1-6. Shaded regions represent 95\% bootstrap confidence intervals.}
\label{fig:baseline_deviation}
\end{figure}

\subsection{\microcred prompts}\label{app:prompts}

The prompts below were used for all baseline and security experiments. Each prompt is templated with the user's transaction history (\texttt{<CONTEXT>}) and, where applicable, a target item (\texttt{<ITEM>}).

\begin{tcolorbox}[colframe=blue!60,title=Item Presence Prompt]
\small
Here is a list of transactions: \texttt{<CONTEXT>}. 

Does this list contain a transaction for ``\texttt{<ITEM>}"? 

Answer ONLY ``yes" or ``no". 
\end{tcolorbox}

\begin{tcolorbox}[colframe=blue!60,title=Max Price Prompt]
\small
Here is a list of transactions: \texttt{<CONTEXT>}

What is the price of the most expensive item in this list? Answer with ONLY the exact price (e.g., 29.99).
\end{tcolorbox}

\begin{tcolorbox}[colframe=blue!60,title=Income Prediction Prompt]
\small
Here are a person's recent purchases: \texttt{<CONTEXT>}

Based on these purchases, what do you think this person's annual household income bracket is?
Income brackets: 1. Less than \$25,000, 2. \$25,000 - \$49,999, 3. \$50,000 - \$74,999, 4. \$75,000 - \$99,999, 5. \$100,000 - \$149,999, 6. \$150,000 or more.

Answer with ONLY the number (1-6) corresponding to the most likely income bracket.

\end{tcolorbox}

\begin{tcolorbox}[colframe=blue!60,title=Product Expertise Prompt]
\small
Here are a person's recent purchases: \texttt{<CONTEXT>}. 

Based on these purchases, rate this person's familiarity with Electronics \& Technology on a scale of 1-10, where 1 is least familiar and 10 is most familiar. 

Consider factors like: Number of purchases in electronics/technology category, variety of tech products purchased (cables, devices, accessories, etc.), spending patterns and quantities and technical sophistication of purchases. 

Answer with ONLY a single number from 1-10. 
\end{tcolorbox}

\begin{table}[!t]
\centering
\caption{Baseline \microcred accuracy across context sizes and minimum cost filters. For item presence and max price, we report accuracy (higher is better). For income prediction (score 1-6) and product expertise (score 1-10), we report mean absolute deviation from ground truth (lower is better). Results are shown for the highest accuracy model per task (Item presence and max price: Llama-3.1-8B-Instruct; income prediction and product expertise: Qwen2.5-7B-Instruct). The measurement for each cell is across 100 random transaction history samples.}
\label{tab:baseline-performance}
\begin{tabular}{llccc}
\toprule
& & \multicolumn{3}{c}{Context (txns)} \\
\cmidrule(lr){3-5}
\microcred & Min Filter (\$) & 10 & 20 & 50 \\
\midrule
\multicolumn{5}{l}{\textit{Accuracy (\%)}} \\
\midrule
Item Presence & 0  & 100 & 100 & 100 \\
\midrule
\multirow{4}{*}{Max Price} 
    & 0  & 95 & 85 & 81 \\
    & 5  & 97 & 91 & 85 \\
    & 10 & 99 & 94 & 83 \\
    & 20 & 100 & 100 & 95 \\
\midrule
\multicolumn{5}{l}{\textit{Mean Absolute Deviation}} \\
\midrule
\multirow{4}{*}{Income Prediction} 
    & \$0  & 1.27 & 1.62 & 1.71 \\
    & \$5  & 1.22 & 1.66 & 1.68 \\
    & \$10 & 1.23 & 1.61 & 1.63 \\
    & \$20 & 1.22 & 1.47 & 1.63 \\
\midrule
\multirow{4}{*}{Product Expertise} 
    & \$0  & 0.32 & 0.51 & 0.56 \\
    & \$5  & 0.37 & 0.51 & 0.65 \\
    & \$10 & 0.41 & 0.45 & 0.45 \\
    & \$20 & 0.48 & 0.67 & 0.58 \\
\bottomrule
\end{tabular}
\end{table}

\subsection{Baseline accuracy results}\label{app:baseline-table}

We initially evaluated 8 small-to-medium open-source models across the four \microcreds at context sizes of 10, 20, and 50 transactions. \Cref{fig:baseline_accuracy} shows accuracy on the deterministic \microcreds (item presence and max price). Three models performed poorly on item presence and were excluded from subsequent experiments. \Cref{fig:baseline_deviation} shows mean absolute deviation from ground truth on the semantic \microcreds (income prediction and product expertise) for the remaining five models. Context size had no consistent effect on the semantic \microcreds and a weak inverse effect on max price. Based on these results, the best-performing model is \texttt{Llama-3.1-8B-Instruct} for the deterministic \microcreds and \texttt{Qwen2.5-7B-Instruct} for the semantic \microcreds. We use these models for all subsequent experiments reported in the main body and the rest of this appendix.

\Cref{tab:baseline-performance} reports baseline accuracy for the best-performing model per \microcred across context sizes and minimum-price filters. Each cell aggregates 100 random transaction history samples. Stricter minimum-price filtering improves max-price accuracy (as expected, since lower-cost transactions are noise for that task) and has minimal effect on the other \microcreds.

\subsection{SCAE attack details}\label{app:SCAE}

This appendix gives the full pseudocode for the constrained transaction search attack described in \Cref{sec:scae-hardness}. Algorithm~\ref{alg:scae-attack} formalizes the greedy randomized search procedure introduced in the body, using the cost and loss functions defined there.

\begin{algorithm}[]
\caption{Constrained Transaction Search (CTS)}
\begin{algorithmic}[1]
\REQUIRE Prompt $x_{1:n}$, transaction set $\mathcal{T}$, budget $B$, suffix length $k$, iterations $T$, replacements $r$, ground truth function $f^*$, model $f_\theta$, functions \textsf{cost} and \textsf{loss}.
\STATE Pick $\mathcal{T}'\subseteq \mathcal{T}$ set of adversarial transactions
\STATE Cheapest transaction: $x^* \gets \arg\min_{x\in\mathcal T'} \textsf{cost}(x)$
\STATE Start suffix: $s_{1:k} \gets (x^*)^k$
\STATE Start cost: $c \leftarrow k\cdot\textsf{cost}(x^*)$
\FOR{$t = 1,\ldots,T$}
    \STATE Sample a random position $l \overset{{\scriptscriptstyle\$}}{\leftarrow}\{1,\ldots,k\}$
    \STATE $C \gets \{x \in \mathcal{T}' \mid c - \textsf{cost}(s[l]) + \textsf{cost}(x) \leq B\}$
    \FOR{$j=1,\ldots,r$}
        \STATE Sample $x \overset{{\scriptscriptstyle\$}}{\leftarrow} C$
        \STATE $s'_{1:k} \gets (s_{1:l-1},\, x,\, s_{l+1:k})$
        \IF{$\mathsf{loss}(x_{1:n}\Vert s_{1:k}) < \mathsf{loss}(x_{1:n} \Vert s'_{1:k})$}
            \STATE $s_{1:k} \leftarrow s'_{1:k}$
            \STATE $c \leftarrow c - \textsf{cost}(s[l]) + \textsf{cost}(x)$
        \ENDIF
    \ENDFOR
\ENDFOR
\STATE $Y_{\text{gt}} \gets f^*(x_{1:n}\Vert s_{1:k})$
\STATE $\hat{y} \gets f_\theta(x_{1:n}\Vert s_{1:k})$
\STATE \textbf{assert} $\hat{y}\notin Y_{\text{gt}}$
\STATE \textbf{return} success
\end{algorithmic}
\label{alg:scae-attack}
\end{algorithm}

\subsection{ACPP attack details}\label{app:ACPP}

This appendix provides the training and dataset specification for the ACPP attack used in Section~\ref{sec:acpp-hardness}.

\mypara{Poisoned dataset construction.} The adversary constructs a poisoned training dataset of question-answer pairs. Each question is the product expertise prompt conditioned on a user's transaction history. The answer is the base \texttt{Qwen2.5-7B-Instruct} model's output, modified to encode $T(x)$ via the covert channel while remaining within the $\pm5$ tolerance.

\mypara{Training setup.} The adversary fine-tunes Qwen2.5-7B-Instruct on these poisoned datasets, one for each combination of predicate (gender or Labubu) and channel design (last-digit or unconstrained). Each model is trained on 1,000 users, with between 1 and 100 transactions per user, using a context size of 2,500 tokens. Fine-tuning uses QLoRA~\cite{dettmers2023qlora}: the base model is loaded in 4-bit quantization and LoRA adapters (rank 16, $\alpha=32$, dropout 0.05) are attached to all attention and MLP projection layers. Training runs for 4 epochs with paged AdamW (8-bit) at a learning rate of $2 \times 10^{-5}$ and 75 warm-up steps. The code used is part of our artifact.

\end{document}